\def\classP{$\boldsymbol P$ }
\def\classNP{$\boldsymbol N \! \boldsymbol P$ }
\newtheorem{theorem}{Theorem}[section]
\newtheorem{corollary}{{Corollary}}[section]
\def\len{\mbox{len}}
\begin{document}


\title[A 3-CNF-SAT descriptor algebra and the solution of the
\classP=\classNP conjecture]{A
  3-CNF-SAT descriptor algebra and  the solution of the \classP=\classNP conjecture}

\author[M.R\'emon & J.Barth\'elemy]{%
      Prof. Marcel R\'emon\authorinfo{%
M.R\'emon, Department of Mathematics,
      Namur University, Belgium;
        \mbox{marcel.remon@unamur.be}}
\and and Dr. Johan Barth\'elemy\authorinfo{%
J.Barth\'elemy, SMART Infrastructure Facilities, University of Wollongong, Australia; \mbox{johan@uow.edu.au}} 
      }

\maketitle

\begin{abstract}
\noindent The relationship between the complexity classes \classP and
\classNP is an unsolved question in the field of theoretical computer
science. In this paper, we investigate a descriptor approach based on
lattice properties.  In a previous paper, we tried to prove that
neither \classP $\neq$ \classNP nor \classP = \classNP was
``unprovable'' within the  a-temporal framework of Mathematics.  See
\cite{arxiv2}. A part of the proof about the impossibility to prove
that \classP = \classNP turns to be inexact, and yields the first author to
investigate deeper into the possibility of \classP = \classNP.\\
 This
paper proposes a new way to decide the satisfiability
of any 3-CNF-SAT problem.  The analysis of this
exact [non
heuristical] algorithm shows {\bf \em a strictly bounded exponential
  complexity}.  The complexity of any 3-CNF-SAT solution is bounded by
${\cal O}(2^{490})$. This [over-estimated] bound is reached by an algorithm working on the
smallest description (via descriptor functions) of the evolving
set of solutions in function of the already considered clauses, without exploring these solutions.  Any remark about this paper is warmly welcome.
\end{abstract}

\keywords{Algorithm Complexity, \classP $\! \! - \! \!$ \classNP problem,  3-CNF-SAT problem}



\section{The 3-CNF-SAT problem}
\noindent {\it Boolean formulae} are built in the usual way from
propositional variables $x_i$ and the logical connectives $\wedge$, $\vee$ and $\neg$, which are interpreted as conjunction, disjunction, and negation, respectively.  A {\it literal} is a propositional variable or the negation of a propositional variable, and a {\it clause} is a disjunction of literals.  A Boolean formula is {\it in conjunctive normal form} if and only if it is a conjunction of clauses. \\[12pt]
\noindent A {\it 3-CNF formula} $\varphi$ is a Boolean formula in
conjunctive normal form with exactly three literals per clause, like
$\varphi := (x_1 \vee x_2 \vee \neg x_3) \wedge (\neg x_2 \vee  x_3
\vee \neg x_4):= \psi_1 \wedge \psi_2 $.  A {\it 3-CNF formula} is
composed of $n$ propositional variables $x_i$ and $m$ clauses
$\psi_j$.  \\[12pt]
\noindent The {\it 3-CNF-satisfiability or 3-CNF-SAT problem} is to
decide whether there exists or not logical values for the
propositional variables, so that $\varphi$ can be true.  Until now, we
do not know whether it is possible or not to
check the satisfiability of any given {\it 3-CNF} formula $\varphi$ in
a polynomial time with respect of $n$, as the {\it 3-CNF-SAT} problem is known to belong to the class \classNP of problems. See \cite{cormen2001} for details. 
\section{A matrix representation of a 3-CNF formula}
\subsection{Definitions}
\noindent The {\it size} of a 3-CNF formula $\varphi$ is defined
as the size of the corresponding {\it Boolean circuit}, i.e. the
number of logical connectives in  $\varphi$.  Let us note
the following property :
\begin{eqnarray}
 \mbox{\it size($\varphi$)} = {\cal O}(m) = {\cal O}(\alpha \times n)
\label{no1}
\end{eqnarray}
\noindent where $\alpha = m / n $ is the {\it ratio} of
clauses with respect to variables. It seems that $\alpha \approx 4.258$ 
gives the most difficult 3-CNF-SAT problems.  See \cite{Crawford199631}. \\[12pt] 
\noindent Let $\varphi(x_1,x_2,\cdots,x_n)$ be a 3-CNF formula. 
 The set ${\cal S}_{\varphi}$ of all {\it satisfying} solutions is 
\begin{eqnarray}
 {\cal S}_{\varphi} = \{ (x_1,\cdots,x_n) \in
\{0,1\}^n \;  | \; \varphi(x_1,\cdots,x_n) =1 \} 
\label{no2}
\end{eqnarray}
Let $\Sigma_{\varphi} = \# \; {\cal S}_{\varphi}$ and $\bar{s}_1, \cdots,
\bar{s}_{\Sigma_{\varphi}}$ be the ordered elements of $ {\cal
  S}_{\varphi}$.  For $1 \leq j \leq \Sigma_{\varphi} : \bar{s}_j =
(s_j^1,\cdots,s_j^i,\cdots,s_j^n)$. We define the {\it 3-CNF-matrix representation} of $\varphi$ as
$[\varphi]$ :
\begin{eqnarray}
 [\varphi] =  \left( \begin{array}{ccc}
x_1 & x_i & x_n \\
\hline
s_1^1 & \cdots & s_1^n \\
\vdots & s_j^i & \vdots  \\
s_{\Sigma_{\varphi}}^1& \cdots & s_{\Sigma_{\varphi}}^n \end{array}
\right)
\label{no3}
\end{eqnarray}
\subsection{Examples}
\noindent Each clause $\psi_i$ will be represented by a $7 \times 3$
matrix. For example,
\begin{eqnarray*}
 [\psi_1] = [(x_1 \vee x_2 \vee \neg x_3)] =
 \left( \begin{array}{ccc}
x_1 & x_2 & x_3 \\
\hline
0 & 0 & 0 \\
0 & 1 & 0 \\
0 & 1 & 1 \\
1 & 0 & 0 \\
1 & 0 & 1 \\
1 & 1 & 0 \\
1 & 1 & 1 \\
\end{array} \right) \mbox{  and  } 
 [\psi_2] = [(\neg x_2 \vee  x_3
\vee \neg x_4)] =
 \left( \begin{array}{ccc}
x_2 & x_3 & x_4 \\
\hline
0 & 0 & 0 \\
0 & 0 & 1 \\
0 & 1 & 0 \\
0 & 1 & 1 \\
1 & 0 & 0 \\
1 & 1 & 0 \\
1 & 1 & 1 \\
\end{array} \right)  
\end{eqnarray*}
The 3-CNF formula $\varphi = \psi_1 \wedge \psi_2$ will be represented by a $12
\times 4$ matrix :
\begin{eqnarray*}
 [\varphi] = [(x_1 \vee x_2 \vee \neg x_3) \wedge (\neg x_2 \vee  x_3
\vee \neg x_4)] =
 \left( \begin{array}{cccc}
x_1 & x_2 & x_3 & x_4\\
\hline
0 & 0 & 0 & 0 \\
0 & 0 & 0 & 1 \\
0 & 1 & 0 & 0 \\
0 & 1 & 1 & 0 \\
0 & 1 & 1 & 1 \\
1 & 0 & 0 & 0 \\
1 & 0 & 0 & 1 \\
1 & 0 & 1 & 0 \\
1 & 0 & 1 & 1 \\
1 & 1 & 0 & 0 \\
1 & 1 & 1 & 0 \\
1 & 1 & 1 & 1 \\
\end{array} \right)  
\end{eqnarray*}
This paper defines an algebra on this type of matrices such
that $[\varphi] = [\psi_1] \wedge [\psi_2] $.
\section{First definitions and properties for 3-CNF-matrices}
\subsection{Extension to new variables}
\noindent Let $A$ be such a matrix, $A$ can be {\it extended} to new
propositional variables by adding columns filled with the neutral sign
``.'', meaning that the corresponding variable can be set either to 0 or 1. This new matrix $\overline{A}$ is equivalent to $A$.
\begin{eqnarray}
 A = 
 \left( \begin{array}{ccc}
x_1 & x_2 & x_4\\
\hline
a_1^1 & a_1^2 & a_1^4 \\ 
a_j^1 & a_j^2 & a_j^4 \\ 
a_{\Sigma_{\varphi}}^1 & a_{\Sigma_{\varphi}}^2 & a_{\Sigma_{\varphi}}^4 \\ 
\end{array} \right)
\equiv 
\left( \begin{array}{cccc}
x_1 & x_2 & x_3 & x_4\\
\hline
a_1^1 & a_1^2 & . [_1^0] & a_1^4 \\ 
a_j^1 & a_j^2 & . & a_j^4 \\ 
a_{\Sigma_{\varphi}}^1 & a_{\Sigma_{\varphi}}^2 & . & a_{\Sigma_{\varphi}}^4 \\ 
\end{array} \right)=  \overline{A}
\end{eqnarray}
\subsection{Reduction of 3-CNF-matrices}
\noindent The inverse operation, called {\it reduction}, replaces
two same lines only differing by a 0 and a 1 for a variable, by a
unique line with a neutral sign for this variable : 
\begin{eqnarray}
 A = 
 \left( \begin{array}{ccc}
x_1 & x_2 & x_3 \\
\hline
0 & 0 & 0 \\
0 & 0 & 1 \\
0 & 1 & 0 \\
0 & 1 & 1 \\
1 & 0 & 0 \\
1 & 0 & 1 \\
1 & 1 & 0 \\
\end{array} \right) \equiv
 \left( \begin{array}{ccc}
x_1 & x_2 & x_3 \\
\hline
0 & 0 & . \\
0 & 1 & . \\
1 & 0 & . \\
1 & 1 & 0 \\
\end{array} \right) \equiv
 \left( \begin{array}{ccc}
x_1 & x_2 & x_3 \\
\hline
0 & . & . \\
1 & 0 & . \\
1 & 1 & 0 \\
\end{array} \right)  \label{no5}
\end{eqnarray}

\subsection{Disjunction of 3-CNF-matrices}
\noindent Let $A$ and $B$ be two matrices and $\{x_1, \cdots, x_n\}$ the
union of their support variables.  Let $\overline{A}$ and $\overline{B}$ be their
extensions over $\{x_1, \cdots, x_n\}$.  Then we define the {\it
  disjunction} of $A$ and $B$ by
\begin{eqnarray}
 A \vee B = 
 \left( \begin{array}{c}
x_1 \; \cdots \; \; x_n\\
\hline
\overline{A} \\ 
\overline{B} \\ 
\end{array} \right)
\end{eqnarray}
Of course, this new matrix should be reordered so that the lines are
in a ascending order, which can yield sometimes in replacing a line with
a neutral sign by two lines with a one and a zero.
\subsection{Block decomposition of 3-CNF-matrices} 
\noindent Let $A$ a matrix such that the {\it reduction} process
yields to lines with neutral sign, then $A$ can be rewritten as the
disjunction of smaller matrices.  For example,  
\begin{eqnarray*}
 [\psi_1] = 
 \left( \begin{array}{ccc}
x_1 & x_2 & x_3 \\
\hline
0 & 0 & 0 \\
0 & 1 & 0 \\
0 & 1 & 1 \\
1 & 0 & 0 \\
1 & 0 & 1 \\
1 & 1 & 0 \\
1 & 1 & 1 \\
\end{array} \right) = 
 \left( \begin{array}{c}
x_1  \\
\hline
1 \\
\end{array} \right) \vee
 \left( \begin{array}{cc}
x_1 & x_2 \\
\hline
0 & 1 \\
\end{array} \right) \vee
 \left( \begin{array}{ccc}
x_1 & x_2 & x_3 \\
\hline
0 & 0 & 0 \\
\end{array} \right)
\end{eqnarray*}
The block decomposition is not unique. For instance, there are 6 different block
decompositions for a 3-variables clause.
\subsection{Conjunction of 3-CNF-matrices}
\noindent Let $A$ and $B$ be two matrices, $\overline{A}$ and $\overline{B}$
their extensions to the joint set of propositional variables. 
Let $\overline{A}_k$ and $\overline{B}_l$ be the {\it one line matrices}
such that :
\begin{eqnarray}
\overline{A} =
\underset{k=1}{\overset{\Sigma_{\overline{A}}}{\bigvee}}
\overline{A}_k \mbox{ and }
\overline{B} =
\underset{l=1}{\overset{\Sigma_{\overline{B}}}{\bigvee}}
\overline{B}_l 
\end{eqnarray}
 We
define {\it the conjunction} of $A$ and $B$ as 
\begin{eqnarray}
 A \wedge B \equiv \overline{A} \wedge \overline{B} = 
\left( \underset{k=1}{\overset{\Sigma_{\overline{A}}}{\bigvee}}
\overline{A}_k \right) \wedge 
\left( 
\underset{l=1}{\overset{\Sigma_{\overline{B}}}{\bigvee}}
\overline{B}_l \right)=
\underset{k=1}{\overset{\Sigma_{\overline{A}}}{\bigvee}} \;
\underset{l=1}{\overset{\Sigma_{\overline{B}}}{\bigvee}}
\left( \overline{A}_k \wedge \overline{B}_l \right) = 
\underset{k=1}{\overset{\Sigma_{\overline{A}}}{\bigvee}} \;
\underset{l=1}{\overset{\Sigma_{\overline{B}}}{\bigvee}}
\overline{C}_{k,l} 
\end{eqnarray}
where 
\begin{eqnarray}
\overline{C}_{k,l} = 
\left( \begin{array}{ccc}
x_1 & x_i & x_n \\
\hline
a_k^1 & a_k^i & a_k^n \\
\end{array}
\right) \wedge 
\left( \begin{array}{ccc}
x_1 & x_i & x_n \\
\hline
b_l^1 & b_l^i & b_l^n \\
\end{array}
\right)
= \left\{ \begin{array}{l}
\; \emptyset \mbox{ if } \exists \; c_m^i = \mbox{\it ``NaN"}\\
\left( \begin{array}{ccc}
x_1 & x_i & x_n \\
\hline
c_m^1 & c_m^i & c_m^n \\
\end{array}
\right) \mbox{ otherwise }
\end{array} \right.
\end{eqnarray}
and
\begin{eqnarray}
c_m^i = 
\left\{ \begin{array}{l}
a_k^i \mbox{ if } a_k^i = b_l^i \\
a_k^i \mbox{ if } a_k^i \neq b_l^i \mbox{ and } b_l^i = ``\cdot"  \\
b_l^i \mbox{ if } a_k^i \neq b_l^i \mbox{ and } a_k^i = ``\cdot"  \\
\mbox{\it ``NaN"} \mbox{ otherwise} 
\end{array} \right.
\end{eqnarray}
\subsection{The empty and full 3-CNF-matrices}
\noindent Let us call $\emptyset$, the {\it empty matrix}, with no
line at all.  The empty matrix is neutral for the disjunction operator
$\vee$ and absorbing for the conjunction operator $\wedge$.\\[12pt]
\noindent Let us define $\Omega$, the {\it full matrix}, as a one line matrix
with only neutral signs in it. The full matrix is neutral for $\wedge$ and absorbing for $\vee$.
\subsection{Example of operations}
Let us consider the following block decompositions for $[\psi_1]$ and $[\psi_2]$ with
$x_2$ and $(x_2 \; x_3)$ as common supports.
\begin{eqnarray*}
 [\psi_1] = 
 \left( \begin{array}{ccc}
x_1 & x_2 & x_3 \\
\hline
0 & 0 & 0 \\
0 & 1 & 0 \\
0 & 1 & 1 \\
1 & 0 & 0 \\
1 & 0 & 1 \\
1 & 1 & 0 \\
1 & 1 & 1 \\
\end{array} \right) = 
 \left( \begin{array}{c}
x_2  \\
\hline
1 \\
\end{array} \right) \vee
 \left( \begin{array}{cc}
x_2 & x_3 \\
\hline
0 & 0 \\
\end{array} \right) \vee
 \left( \begin{array}{ccc}
x_1 & x_2 & x_3 \\
\hline
1 & 0 & 1 \\
\end{array} \right)
\end{eqnarray*}
and
\begin{eqnarray*}
 [\psi_2] = 
 \left( \begin{array}{ccc}
x_2 & x_3 & x_4 \\
\hline
0 & 0 & 0 \\
0 & 0 & 1 \\
0 & 1 & 0 \\
0 & 1 & 1 \\
1 & 0 & 0 \\
1 & 1 & 0 \\
1 & 1 & 1 \\
\end{array} \right) = 
 \left( \begin{array}{c}
x_2  \\
\hline
0 \\
\end{array} \right) \vee
 \left( \begin{array}{cc}
x_2 & x_3 \\
\hline
1 & 1 \\
\end{array} \right) \vee
 \left( \begin{array}{ccc}
x_2 & x_3 & x_4 \\
\hline
1 & 0 & 0 \\
\end{array} \right)
\end{eqnarray*}
\begin{flalign*}
& [\psi_1] \wedge [\psi_2] &\\
& \hspace{15pt} = \emptyset\vee
 \left( \begin{array}{cc}
x_2 & x_3 \\
\hline
1 & 1 \\
\end{array} \right) \vee
 \left( \begin{array}{ccc}
x_2 & x_3 & x_4 \\
\hline
1 & 0 & 0 \\
\end{array} \right) \vee
 \left( \begin{array}{cc}
x_2 & x_3 \\
\hline
0 & 0 \\
\end{array} \right) \vee
\emptyset \vee
\emptyset\vee
 \left( \begin{array}{ccc}
x_1 & x_2 & x_3 \\
\hline
1 & 0 & 1 \\
\end{array} \right) \vee
\emptyset \vee
\emptyset & \\
& \hspace{15pt} = \left( \begin{array}{cccc}
x_1 & x_2 & x_3 & x_4\\
\hline
. & 0 & 0 & . \\ 
. & 1 & 0 & 0 \\ 
. & 1 & 1 & . \\
1 & 0 & 1 & . \\ 
\end{array} \right)&
\end{flalign*}
\mbox{}\\[3pt]
\subsection{Lattice structure of 3-CNF-matrices}
\noindent A {\it semi-lattice} $(X,\vee)$ is a pair consisting of a set X
and a binary operation $\vee$ which is associative, commutative, and
idempotent. \\[12pt]
\noindent Let us note ${\cal A}$ the set of all the
3-CNF-matrices. Then $({\cal A},\vee)$ and $({\cal A},\wedge)$ are
both semi-lattices, respectively called {\it join} and {\it meet}
semi-lattices. \\[12pt] 
\noindent Let us define {\it the two absorption laws} as  $x = x \vee
(x \wedge y)$  and its dual $x = x \wedge (x \vee y)$.
A {\it lattice} is an algebra $(X, \vee, \wedge)$ satisfying equations
expressing associativity, commutativity, and idempotence of $\vee$ and
$\wedge$, and satisfying the two absorption equations. \\[12pt]
\noindent $({\cal A},\vee, \wedge)$ is a lattice over the set of
3-CNF-matrices with respect to the disjunction and conjunction
operators.  Moreover, $({\cal A},\vee, \wedge)$ is {\bf \em a
  distributive bounded
lattice} as $\wedge$ is distributive with respect to $\vee$ and $A
\vee \Omega = \Omega \; \; \& \;\; A \wedge \emptyset = \emptyset \;\; \forall A
\in {\cal A}$.  See \cite{burris2012a} for more details over lattices. 

\section{Characterization theorems via functional descriptors}
\begin{theorem}{\bf Every non empty 3-CNF-matrix can be characterized by a
    one-line parameterized 3-CNF-matrix, called its functional matrix
    description.} 
\begin{eqnarray}
\forall \;  [\varphi] &=&  \left( \begin{array}{ccc}
x_1 & x_i & x_n \\
\hline
s_1^1 & \cdots & s_1^n \\
\vdots & s_j^i & \vdots  \\
s_{\Sigma_{\varphi}}^1& \cdots & s_{\Sigma_{\varphi}}^n \end{array}
\right)\neq \emptyset
\; , \; \exists \; n \mbox{ functions } f_i : \{0,1\}^i
\rightarrow \{0,1\} \mbox{ such that } \nonumber \\[12pt]
\mbox{}
[\varphi] &=&  \underset{(\alpha_1,\cdots,\alpha_n) \in \{0,1\}^n}{\bigvee}
 \left( \begin{array}{ccccc}
x_1 & \cdots & x_i & \cdots & x_n \\
\hline
f_1(\alpha_1) & \cdots & f_i(\alpha_1,\cdots,\alpha_i) & \cdots &
f_n(\alpha_1, \cdots, \alpha_n) \\
\end{array}
\right) \label{h-def}\\
\mbox{} \nonumber\\
&\stackrel{notation}{\equiv}& \left[ \begin{array}{c}
f_1(\alpha_1)\\
\vdots\\
f_n(\alpha_1,\cdots,\alpha_n)
\end{array}
\right] \label{no12} 
\end{eqnarray}
\end{theorem}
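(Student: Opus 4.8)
The plan is to build the functions $f_i$ by a prefix-tree (causal) construction on the non-empty set ${\cal S}_{\varphi}\subseteq\{0,1\}^n$ of satisfying rows. Call a word $(b_1,\dots,b_k)\in\{0,1\}^k$ a \emph{feasible prefix} of length $k$ if there exists $\bar s=(s_1,\dots,s_n)\in{\cal S}_\varphi$ with $s_j=b_j$ for all $j\le k$. Since ${\cal S}_\varphi\neq\emptyset$, the empty word is feasible, and a feasible prefix of length $k<n$ always admits at least one feasible one-bit extension.

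First I would define $f_i$ recursively so that, on input $(\alpha_1,\dots,\alpha_i)$, one first computes $b_j:=f_j(\alpha_1,\dots,\alpha_j)$ for $j<i$ and then sets
\[
f_i(\alpha_1,\dots,\alpha_i):=
\begin{cases}
\alpha_i, & \text{if } (b_1,\dots,b_{i-1},0),\,(b_1,\dots,b_{i-1},1)\text{ are both feasible},\\
c, & \text{if } (b_1,\dots,b_{i-1},c)\text{ is the unique feasible length-}i\text{ extension}.
\end{cases}
\]
The key point, proved by induction on $i$, is the invariant that the computed word $(b_1,\dots,b_i)$ is always a feasible prefix; the induction step uses the existence of a feasible extension, so the case split is exhaustive and $f_i$ is a well-defined function of $\alpha_1,\dots,\alpha_i$ only, i.e. $f_i:\{0,1\}^i\to\{0,1\}$ as required.

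Then I would prove the two inclusions relating ${\cal S}_\varphi$ to the image $I:=\{(f_1(\alpha_1),\dots,f_n(\alpha_1,\dots,\alpha_n)):(\alpha_1,\dots,\alpha_n)\in\{0,1\}^n\}$. For $I\subseteq{\cal S}_\varphi$: by the invariant with $i=n$, every output word is a feasible prefix of length $n$, hence an element of ${\cal S}_\varphi$. For ${\cal S}_\varphi\subseteq I$: given $\bar s\in{\cal S}_\varphi$, feed $\alpha_j:=s_j$ and show by induction that $f_j(\alpha_1,\dots,\alpha_j)=s_j$ — in the first branch this is immediate, and in the second branch the unique feasible extension of the feasible prefix $(s_1,\dots,s_{j-1})$ must be $s_j$ itself. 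Hence $I={\cal S}_\varphi$, and since the disjunction operator of Section~3 merely collects and reorders the rows of its one-line arguments, the big $\bigvee$ in \eqref{h-def} has exactly the rows of $[\varphi]$; \eqref{no12} is then only notation for that object.

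The inductions are routine; the one delicate point is the \emph{causality} restriction — that $f_i$ may depend on $\alpha_1,\dots,\alpha_i$ but not on later parameters — which is precisely why the recursion must be routed through the already-computed output prefix $(b_1,\dots,b_{i-1})$ rather than through ${\cal S}_\varphi$ directly, and one must verify that this prefix is itself a function of $\alpha_1,\dots,\alpha_{i-1}$ alone (it is, by construction). A minor point is that distinct parameter tuples may produce equal rows, but this is harmless since $\vee$ on 3-CNF-matrices is idempotent. \hfill$\blacksquare$
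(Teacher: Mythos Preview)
Your argument is correct and is, at bottom, the same left-to-right recursive construction the paper uses: both determine $f_i$ from the already-computed output prefix $(b_1,\dots,b_{i-1})$ by asking which one-bit extensions remain feasible. The paper phrases this as an induction on $n$, splitting $[\varphi]$ on the value of $x_1$ into two $(n{-}1)$-variable matrices with descriptors $(f_i)$ and $(g_i)$, and then writes the combined descriptor in closed form as the mod-$2$ polynomial
\[
h_i(\alpha_1,\dots,\alpha_i)=(\alpha_1+1)\,f_i(\alpha_2,\dots,\alpha_i)+\alpha_1\,g_i(\alpha_2,\dots,\alpha_i).
\]
This explicit expression is exactly what your two-case definition evaluates to, but having it in algebraic form is the payoff the paper is after: Corollary~4.1 (that the descriptors are mod-$2$ multilinear in the $\alpha_j$) and the signature $\Delta_i$ of~(\ref{delta-def}) are read off directly from this recursion, whereas your piecewise formulation would need an additional step to recover them. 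Conversely, your prefix-tree version handles more cleanly the degenerate case in which one of the two $x_1$-branches is empty --- a case the paper's inductive step passes over in silence.
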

\mbox{}\\[12pt]
\noindent So, the knowledge of $f_1(\alpha_1),  \cdots,  f_i(\alpha_1,\cdots,\alpha_i),  \cdots, 
f_n(\alpha_1, \cdots, \alpha_n)$ characterizes fully $ [\varphi] $.
These modulo-2 functions are called {\bf the functional descriptors}
of $\varphi$.
\mbox{}\\[12pt]
\noindent {\it Example : } 
\begin{flalign*}
 [\varphi] & = [(x_1 \vee x_2 \vee \neg x_3) \wedge (\neg x_2 \vee  x_3
\vee \neg x_4)] \\
& = \underset{(\alpha_1,\cdots,\alpha_4) \in
  \{0,1\}^4}{\bigvee}  
 \left( \begin{array}{cccc}
x_1 & x_2 & x_3 & x_4 \\
\hline
\alpha_1 & \alpha_2 & (\alpha_1+1)(\alpha_2+1)\alpha_3+\alpha_3 &
\alpha_2 (\alpha_3+1) \alpha_4 +\alpha_4\\
\end{array}
\right)_{ \; \; \mbox{(mod 2)}}
\end{flalign*}
\mbox{}\\[12pt]
\begin{proof}
\mbox{}\\[12pt]
\noindent $\bullet$ The theorem is satisfied for $n=1$ as 
\begin{eqnarray*}
\left( \begin{array}{c}
x_1 \\
\hline
1  \\
\end{array} \right)
= 
\left( \begin{array}{c}
x_1 \\
\hline
f_1(\alpha_1) \equiv 1  \\
\end{array} \right) \; ; \; 
\left( \begin{array}{c}
x_1 \\
\hline
0  \\
\end{array} \right)
= 
\left( \begin{array}{c}
x_1 \\
\hline
f_1(\alpha_1) \equiv 0  \\
\end{array} \right) \; ; \; 
\left( \begin{array}{c}
x_1 \\
\hline
0  \\
1 \\
\end{array} \right)
= 
\underset{\alpha_1 \in
  \{0,1\}}{\bigvee}
 \left( \begin{array}{c}
x_1 \\
\hline
\alpha_1   \\ 
\end{array} \right)
\end{eqnarray*}
$\bullet$
Let the theorem be true for $n-1$ and $[\varphi]$ be a 3-CNF-matrix of
dimension $n$.  There exist two 3-CNF-matrices $[\varphi_1]$ and
$[\varphi_2]$ of size $n-1$ such that :
\begin{eqnarray*}
[\varphi] = \underset{\alpha_i \in
  \{0,1\}}{\bigvee} \left( \begin{array}{cc}
x_1 & x_2 \cdots x_n \\
\hline
0 & f_2(\alpha_2) \cdots f_n(\alpha_2,\cdots,\alpha_n)   \\ 
\end{array} \right)
\underset{\alpha_i \in
  \{0,1\}}{\bigvee} \left( \begin{array}{cc}
x_1 & x_2 \cdots x_n \\
\hline
1 & g_2(\alpha_2) \cdots g_n(\alpha_2,\cdots,\alpha_n)   \\
\end{array} \right) \\
\end{eqnarray*}
\begin{eqnarray*}
\mbox{Thus \hspace{1cm}}  [\varphi] = \underset{\alpha_i \in
  \{0,1\}}{\bigvee} \left( \begin{array}{c}
 x_1 \cdots x_n \\
\hline
h_1(\alpha_1) \cdots h_n(\alpha_1,\cdots,\alpha_n)   \\ 
\end{array} \right) \mbox{\hspace{2,8cm}}
\end{eqnarray*}
\mbox{\hspace{1cm}} where 
\begin{eqnarray*}
 h_1(\alpha_1) & = & \alpha_1 \\
h_i(\alpha_1,\cdots,\alpha_i) & = & (\alpha_1+1) f_i(\alpha_2,\cdots,\alpha_i) + \alpha_1
g_i(\alpha_2,\cdots,\alpha_i) _{ \; \; \mbox{(mod 2)}} \; \; \mbox{ for } i \neq 1 
\end{eqnarray*}
\end{proof}
\begin{corollary}{\bf The functional descriptors of $\varphi$ are modulo-2 multi-linear
  combinations of $\alpha_i$.}
\end{corollary}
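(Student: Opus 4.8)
Over $\{0,1\}$, call a function a \emph{modulo-2 multi-linear combination} of $\alpha_1,\cdots,\alpha_i$ when it can be written as a sum (mod $2$) of squarefree monomials $\prod_{j\in S}\alpha_j$ with $S\subseteq\{1,\cdots,i\}$, i.e.\ as a polynomial over $\mathbb{F}_2$ in which each $\alpha_j$ occurs with exponent at most $1$. The plan is to prove the corollary by induction on $n$, reusing verbatim the recursive construction of the descriptors $f_1,\cdots,f_n$ from the proof of the theorem above. The only algebraic ingredient is an elementary closure lemma: a sum of modulo-2 multi-linear polynomials is again multi-linear, and a product $p\cdot q$ of two modulo-2 multi-linear polynomials whose variable sets are disjoint is again multi-linear --- expand into monomials; disjointness forbids any repeated factor, so the reduction $\alpha_j^2\equiv\alpha_j$ is not even needed.

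\noindent For $n=1$ the theorem's proof exhibits $f_1\equiv 0$, $f_1\equiv 1$, or $f_1(\alpha_1)=\alpha_1$, all trivially multi-linear. For the inductive step, assume every non-empty $(n-1)$-dimensional 3-CNF-matrix has multi-linear descriptors and let $[\varphi]$ be $n$-dimensional. Splitting $[\varphi]$ along the value of $x_1$ produces (at most) two non-empty $(n-1)$-dimensional 3-CNF-matrices whose descriptor families --- relabelled as $f_2,\cdots,f_n$ and $g_2,\cdots,g_n$ in the variables $\alpha_2,\cdots,\alpha_i$ --- are multi-linear by the induction hypothesis. The theorem then gives $h_1(\alpha_1)=\alpha_1$ and, for $i\ge 2$, $h_i=(\alpha_1+1)f_i+\alpha_1 g_i = f_i+\alpha_1(f_i+g_i)\pmod 2$. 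Applying the closure lemma with the disjoint variable sets $\{\alpha_1\}$ and $\{\alpha_2,\cdots,\alpha_i\}$: $f_i+g_i$ is multi-linear in $\alpha_2,\cdots,\alpha_i$, hence $\alpha_1(f_i+g_i)$ is multi-linear in $\alpha_1,\cdots,\alpha_i$, and adding the multi-linear polynomial $f_i$ keeps $h_i$ multi-linear in $\alpha_1,\cdots,\alpha_i$; and $h_1=\alpha_1$ is multi-linear. This closes the induction.

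\noindent The one place needing a little care is the degenerate split: if every row of $[\varphi]$ has $x_1=0$ (resp.\ $x_1=1$), one sub-block is empty and there is no $g$ (resp.\ no $f$); one then sets $h_1\equiv 0$, $h_i=f_i$ (resp.\ $h_1\equiv 1$, $h_i=g_i$), and multi-linearity is inherited directly. Beyond this bookkeeping I expect no genuine obstacle. Independently, the corollary is also immediate from uniqueness of the algebraic normal form: every map $\{0,1\}^i\to\{0,1\}$ coincides with exactly one squarefree polynomial over $\mathbb{F}_2$, so whatever descriptors the theorem produces must already have the claimed form.
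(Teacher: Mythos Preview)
Your argument is correct and follows essentially the same route as the paper: the paper's proof consists of the single remark that the claim is ``a mere consequence of the recursive definition of $h_i(\alpha_1,\cdots,\alpha_i)$,'' and your induction on $n$ via the identity $h_i=(\alpha_1+1)f_i+\alpha_1 g_i$ is exactly that remark spelled out in full, with the degenerate-split case and the closure lemma made explicit where the paper leaves them implicit. Your closing observation via uniqueness of the algebraic normal form over $\mathbb{F}_2$ is a genuinely independent (and cleaner) alternative that the paper does not mention.
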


\begin{proof}
This is a mere consequence of the recursive definition of
$h_i(\alpha_1,\cdots,\alpha_i)$. 
\begin{eqnarray}
\mbox{ So, } h_i(\alpha_1,\cdots,\alpha_i)&=&\sum_{(\delta_1,\cdots,\delta_i)\in
  \{0,1\}^i} \Delta_i(\delta_1,\cdots,\delta_i) \;\; \alpha_1^{\delta_1}
\cdots \alpha_i^{\delta_i} \;_{ \; \; \mbox{(mod
    2)}}  \label{delta-def} \\
& &\mbox{\hspace{1,5cm} where } \;\; \Delta_i(\delta_1,\cdots,\delta_i)
\in \{0,1\} \nonumber \\
\Delta_i(\delta_1,\cdots,\delta_i) &\mbox{is}& \mbox{called the
  signature of } h_i(\alpha_1,\cdots,\alpha_i). \nonumber
\end{eqnarray}
\end{proof}
\noindent {\it Example : } \\
\noindent Consider a clause $\psi \equiv [\neg] x_r
\vee [\neg] x_s \vee [\neg] x_t$ where $1 \leq r < s < t \leq
n$. 
$[\psi]$ is then characterized by the following characterization functions : 
\begin{eqnarray}
h_i(\alpha_1, \cdots, \alpha_i) & = & \alpha_i \;\;\; \forall \; i < t
\nonumber \\
h_t(\alpha_r,\alpha_s,\alpha_t) & = & \left\{
\begin{array}{ll}
(\alpha_{r}+1)(\alpha_s+1)(\alpha_t+1)+\alpha_t \;\;  &
\mbox{if} \;\; \psi = x_r \vee x_s \vee x_t \\
(\alpha_{r}+1)(\alpha_s+1)\; \alpha_t+\alpha_t \;\;  &
\mbox{if} \;\; \psi = x_r \vee x_s \vee \neg x_t \\
(\alpha_{r}+1)\; \alpha_s \; (\alpha_t+1)+\alpha_t \;\;  &
\mbox{if} \;\; \psi = x_r \vee \neg  x_s \vee x_t \\
(\alpha_{r}+1)\; \alpha_s \; \alpha_t +\alpha_t \;\;  &
\mbox{if} \;\; \psi = x_r \vee \neg x_s \vee \neg x_t \\
\alpha_{r} \; (\alpha_s+1)(\alpha_t+1)+\alpha_t \;\;  &
\mbox{if} \;\; \psi = \neg x_r \vee x_s \vee x_t \\
\alpha_{r}\;(\alpha_s+1)\; \alpha_t+\alpha_t \;\;  &
\mbox{if} \;\; \psi = \neg x_r \vee x_s \vee \neg x_t \\
\alpha_{r}\; \alpha_s \; (\alpha_t+1)+\alpha_t \;\;  &
\mbox{if} \;\; \psi = \neg x_r \vee \neg  x_s \vee x_t \\
\alpha_{r}\; \alpha_s \; \alpha_t +\alpha_t \;\;  &
\mbox{if} \;\; \psi = \neg x_r \vee \neg x_s \vee \neg x_t \\
\end{array} \right. \label{def_h} \\
\nonumber
\end{eqnarray}
\begin{theorem} {\bf The conjunction operator $\wedge$ between two
  sets of clauses can be rewritten as the merging of their
  characterization functions : $[h_i(\cdot)] = [f_i(\cdot)] \wedge [g_i(\cdot)]$.}
\end{theorem}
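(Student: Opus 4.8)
\noindent The plan is to reduce the conjunction to its one-line constituents and then recover the functional form by an induction on the number of variables that parallels the proof of the characterization theorem. Writing $[\varphi]=\bigvee_{\alpha}(f_1(\alpha_1),\dots,f_n(\alpha_1,\dots,\alpha_n))$ and $[\varphi']=\bigvee_{\beta}(g_1(\beta_1),\dots,g_n(\beta_1,\dots,\beta_n))$ as in \eqref{h-def}, distributivity of $\wedge$ over $\vee$ in $({\cal A},\vee,\wedge)$ gives $[\varphi]\wedge[\varphi']=\bigvee_{\alpha,\beta}\big((f(\alpha))\wedge(g(\beta))\big)$. Since a functional line carries no neutral sign, the one-line conjunction rule $c_m^i$ of Section~3 degenerates: the term $(f(\alpha))\wedge(g(\beta))$ is $\emptyset$ unless $f_i(\alpha)=g_i(\beta)$ for every $i$, in which case it equals that common line. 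Hence $[\varphi]\wedge[\varphi']$ collects exactly the lines common to the two matrices, and what is left to prove is that a legal descriptor tuple $(h_1,\dots,h_n)$ for this intersection can be produced by a rule that reads only $f_i$ and $g_i$; this is the meaning of the notation $[h_i(\cdot)]=[f_i(\cdot)]\wedge[g_i(\cdot)]$.

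\noindent I would then argue by induction on $n$. For $n=1$ the matrices $[\varphi]$ and $[\varphi']$ each range over $\{(0)\}$, $\{(1)\}$ and $\{(0),(1)\}$, and in each of the nine cases $h_1$ is read directly off $f_1$ and $g_1$ (the conjunction being $\emptyset$, $\{(0)\}$ or $\{(1)\}$). For the inductive step, split both inputs according to the value of $x_1$, writing $[\varphi]=(0\,|\,[\varphi_0])\vee(1\,|\,[\varphi_1])$ and $[\varphi']=(0\,|\,[\varphi'_0])\vee(1\,|\,[\varphi'_1])$ with the four sub-matrices defined on $x_2,\dots,x_n$. The cross terms $(0\,|\,\cdot)\wedge(1\,|\,\cdot)$ vanish because of the conflict on $x_1$, so $[\varphi]\wedge[\varphi']=(0\,|\,[\varphi_0]\wedge[\varphi'_0])\vee(1\,|\,[\varphi_1]\wedge[\varphi'_1])$. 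By the characterization theorem the four sub-matrices have descriptors (in standard form, the restrictions of $f_i,g_i$ to $\alpha_1=0$ and to $\alpha_1=1$), the induction hypothesis merges them pairwise into $h_i^{0}$ and $h_i^{1}$, and the recombination device from the proof of the characterization theorem — $h_1(\alpha_1)=\alpha_1$ and $h_i=(\alpha_1+1)\,h_i^{0}+\alpha_1\,h_i^{1}\pmod 2$ — reassembles a descriptor of the whole conjunction. Unwinding the recursion exhibits $h_i$ as a fixed combination of $f_i$ and $g_i$, and the multilinear signature form \eqref{delta-def} is preserved because the merge only adds and multiplies the $\alpha_j$'s.

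\noindent The real work is not the algebra but the degenerate cases. When a half of $[\varphi]$, or of the conjunction itself, is empty, the formula $h_1(\alpha_1)=\alpha_1$ is wrong — $h_1$ must collapse to a constant $0$ or $1$, and an empty sub-conjunction must be propagated through $\wedge$ as the absorbing $\emptyset$ — so the merge rule has to be stated with these sub-cases built in. One must also check that the rule does not depend on the (non-unique) block decomposition chosen to present $[\varphi]$ and $[\varphi']$, equivalently that it is equivariant under relabelling of the parameters $\alpha_j$, so that $[h_i(\cdot)]=[f_i(\cdot)]\wedge[g_i(\cdot)]$ is genuinely a binary operation on characterization functions whose output always lies in the triangular form \eqref{h-def}. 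I expect these well-definedness verifications, rather than the inductive identity, to be the main obstacle.
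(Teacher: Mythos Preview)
Your argument is sound as an existence proof but it is not the paper's proof, and it does not deliver what the paper actually uses afterwards. The paper does not argue abstractly that ``some'' merge rule exists; between the theorem heading and the proof it writes down an \emph{explicit} coordinate-wise formula \eqref{merge} for $h_t$ in terms of $f_t$ and $g_t$ alone, obtained by splitting on the \emph{last} relevant parameter $\alpha_t$ rather than on $x_1$. The proof then checks four cases according to whether $f_t(\cdot,0)=g_t(\cdot,0)$ and $f_t(\cdot,1)=g_t(\cdot,1)$: if they agree for at least one value of $\alpha_t$, $h_t$ collapses to that common value; if they disagree for both, $h_t$ is reset to $\alpha_t$ and the obstruction $g_j^{*}=[f_t(\cdot,0)+g_t(\cdot,0)][f_t(\cdot,1)+g_t(\cdot,1)]$ is pushed back as a new constraint on the highest $\alpha_j$ it contains, triggering a recursive call \eqref{recursive}. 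Your induction on the first variable gives a different recursion and, when unwound, a different formula; it proves the bare statement but not the closed form that the later complexity estimates (the $\len(h_t)$ bounds of Section~6) are computed from.

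One technical point in your induction is also looser than you indicate. You assert that the restrictions $f_i(0,\alpha_2,\dots,\alpha_i)$ and $f_i(1,\alpha_2,\dots,\alpha_i)$ are descriptors of $[\varphi_0]$ and $[\varphi_1]$. That is only correct when $f_1(\alpha_1)=\alpha_1$. If $f_1$ is constant, say $f_1\equiv 0$, then $[\varphi_1]=\emptyset$ but $[\varphi_0]$ is the projection of the \emph{whole} $[\varphi]$ to $x_2,\dots,x_n$, and its lines are parameterised by \emph{both} values of $\alpha_1$; neither restriction $f_i(0,\cdot)$ nor $f_i(1,\cdot)$ alone is a triangular descriptor of it on $n-1$ parameters. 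So the ``degenerate cases'' you flag are not only the emptiness of a half-conjunction but also the failure of your identification of sub-descriptors, and this has to be handled before the inductive step goes through. The paper's last-coordinate split avoids this because $h_t$ is built directly from $f_t,g_t$ and emptiness is exported as the side constraint $g_j^{*}$ rather than as a missing branch of an $x_1$-split.
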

%
\begin{eqnarray*} \mbox{ Let } [\varphi] \equiv \left[ \begin{array}{c}
f_1(\alpha_1)\\
\vdots\\
f_n(\alpha_1,\cdots,\alpha_n)
\end{array}
\right] \mbox{ and } [\varphi'] \equiv \left[ \begin{array}{c}
g_1(\alpha_1)\\
\vdots\\
g_n(\alpha_1,\cdots,\alpha_n)
\end{array}
\right] \\
\end{eqnarray*}
\noindent Note : $\varphi$ or $\varphi'$ should be extended if necessary
in order to get the same support of propositional variables.  Remember
that all operations are {\it modulo 2} : $\alpha_i + \alpha_i = 0$,
$\alpha_i^2 = \alpha_i$ and $(\alpha_i + 1) \alpha_i = 0$ for all $\alpha_i$.\\
\begin{eqnarray*} \mbox{ Then } [\varphi] \wedge [\varphi']  \equiv \left[ \begin{array}{c} 
h_1(\alpha_1)\\
\vdots\\ 
h_n(\alpha_1,\cdots,\alpha_n)
\end{array}
\right] \mbox{\hspace{2cm}}
\end{eqnarray*}
\noindent where for  $1 \leq t \leq n$ : 
\begin{eqnarray}
h_t(\alpha_1,\cdots, \alpha_t) 
= & (\alpha_t +1) &\; \cdot \;  \{ \; [f_t(\alpha_1,\cdots, \alpha_{t-1},0) +  
 g_t(\alpha_1,\cdots, \alpha_{t-1},0)] \label{merge} \\
& &  \; \; \;  \; \;\cdot \; [ f_t(\alpha_1,\cdots, \alpha_{t-1},1) \cdot  
 g_t(\alpha_1,\cdots, \alpha_{t-1},1)]  \nonumber \\
& &  \; \; + \; [f_t(\alpha_1,\cdots, \alpha_{t-1},0) \cdot  
 g_t(\alpha_1,\cdots, \alpha_{t-1},0)] \; \} \nonumber \\
&  \; +\;\;  \; \alpha_t &\; \cdot \;
   \{ \; [f_t(\alpha_1,\cdots, \alpha_{t-1},1) +  
 g_t(\alpha_1,\cdots, \alpha_{t-1},1)]  \nonumber \\
& &  \; \;\;  \; \;\cdot \; [f_t(\alpha_1,\cdots, \alpha_{t-1},0) +  
 g_t(\alpha_1,\cdots, \alpha_{t-1},0) ]  \nonumber \\
& &  \; \; \; + \; [f_t(\alpha_1,\cdots, \alpha_{t-1},1) +  
 g_t(\alpha_1,\cdots, \alpha_{t-1},1)]   \nonumber \\
& &  \;  \; \; \;  \; \cdot \; [f_t(\alpha_1,\cdots, \alpha_{t-1},0) \cdot  
 g_t(\alpha_1,\cdots, \alpha_{t-1},0)]   \nonumber \\
& & \;  \; \;+   \; [f_t(\alpha_1,\cdots, \alpha_{t-1},1) \cdot  
 g_t(\alpha_1,\cdots, \alpha_{t-1},1) ] \}\;_{ \; \; \mbox{(mod 2)}}  \nonumber
\end{eqnarray}
\begin{eqnarray}
\mbox{Moreover if there exists} & a & \mbox{(unique) $j < t$,
   related to the highest $\alpha_j$ such
   that :} \; \; \nonumber \\
  g^*_j(\alpha_1,\cdots, \alpha_j)& \equiv  & 
  [f_t(\alpha_1,\cdots, \alpha_{t-1},0) +  
 g_t(\alpha_1,\cdots, \alpha_{t-1},0)]  \; \cdot \; \nonumber \\
& & [ f_t(\alpha_1,\cdots, \alpha_{t-1},1) +  
 g_t(\alpha_1,\cdots, \alpha_{t-1},1)] \neq 0  \label{recursive} \\
& & \mbox{[An additional constraint over
  $\alpha_j$ induced by the conjunction operation]} \nonumber \\
& \Rightarrow &  \; \; \mbox{call to a new  merging } 
\; \; f_j(\alpha_1, \cdots, \alpha_j) \;\; \wedge
  \;\; g^*_j(\alpha_1, \cdots, \alpha_j) \nonumber \\
& & \; \; \mbox{using a recursive call to definition
  (\ref{merge})}. \nonumber \\ 
\end{eqnarray}
\noindent Recursivity will end as soon as there is no longer such
$g^*_j(\alpha_1,\cdots, \alpha_j) = 1$ or when $
g^*_j(\alpha_1,\cdots, \alpha_j)$ is no longer a function of
$\alpha_i$ but a constant always equal to $1$.\\
\mbox{} \\
\begin{proof}
Consider the possible values for 
$f_t(\alpha_1,\cdots,\alpha_t)$ and $g_t(\alpha_1,\cdots,\alpha_t)$ in
equation (\ref{merge}) when $\alpha_t \in \{0,1\}$ : \\
\begin{eqnarray*}
\bullet \; f_t(\alpha_1,\cdots,\alpha_{t})&=& g_t(\alpha_1,\cdots,\alpha_{t}) \;\;
\mbox{for } \alpha_t \in \{0,1\} \\ 
&\Downarrow & \\
h_t(\alpha_1,\cdots,\alpha_{t}) & = & 
(\alpha_t +1) \cdot \;  \{ \; [f_t(\cdot,0) +  g_t(\cdot,0)]  \; \cdot \; [ f_t(\cdot,1) \cdot  g_t(\cdot,1)] \; + \; [f_t(\cdot,0) \cdot  
 g_t(\cdot,0)] \; \}  \; + \\
& & \; \alpha_t \cdot   \{ \; [f_t(\cdot,1) +  
 g_t(\cdot,1)]  \; \cdot \; [f_t(\cdot,0) +  
 g_t(\cdot,0)  ]   \; + \\
& & \; \;\;\;\;\;\;\;\;\;\;\; [f_t(\cdot,1) +  
 g_t(\cdot,1)]  \; \cdot \; [f_t(\cdot,0) \cdot  
 g_t(\cdot,0)]  \; +   \; [f_t(\cdot,1) \cdot  
 g_t(\cdot,1) ] \;  \} \\
& = & (\alpha_t +1) \cdot  f_t(\cdot,0) +
 \alpha_t \cdot  f_t(\cdot,1)   \;\;\; \mbox{[as $f_t()+ g_t()=0$ and
   $f_t() \cdot g_t() = f_t^2()=f_t()$]} \\
& = & f_t(\alpha_1,\cdots,\alpha_{t}) =
h_t(\alpha_1,\cdots,\alpha_{t}) \\
& & \mbox{[$h_t()$ is thus the conjunction of $f_t()$ and $g_t()$]}
\\[24pt]
\bullet \; f_t(\alpha_1,\cdots,0) \;\; &=& g_t(\alpha_1,\cdots,0)  \;\; \mbox{ but } \;\;
f_t(\alpha_1,\cdots,1) \neq g_t(\alpha_1,\cdots,1)  \\ 
&\Downarrow & \\
h_t(\alpha_1,\cdots,\alpha_t) & = & (\alpha_t +1) \cdot  f_t(\cdot,0) +
 \alpha_t \cdot  f_t(\cdot,0)   \;\;\; \mbox{[as $f_t(\cdot,1)+ g_t(\cdot,1)=1$ and
   $f_t(\cdot,1) \cdot g_t(\cdot,1) = 0$]} \\
& = & f_t(\alpha_1,\cdots,0) = g_t(\alpha_1,\cdots,0)\\
& & \mbox{[$h_t()$ sends $\alpha_t$ to the value where $f_t() = g_t()$]} \\[24pt]
\bullet \; f_t(\alpha_1,\cdots,1) \;\; &=& g_t(\alpha_1,\cdots,1)  \;\; \mbox{ but } \;\;
f_t(\alpha_1,\cdots,0) \neq g_t(\alpha_1,\cdots,0)  \\ 
&\Downarrow & \\
h_t(\alpha_1,\cdots,\alpha_t) & = & (\alpha_t +1) \cdot  f_t(\cdot,1) +
 \alpha_t \cdot  f_t(\cdot,1)   \;\;\; \mbox{[as $f_t(\cdot,0)+
   g_t(\cdot,0)=1\; ; \;
   f_t(\cdot,0) \cdot g_t(\cdot,0) = 0$]} \\
& = & f_t(\alpha_1,\cdots,1) = g_t(\alpha_1,\cdots,1) \\
& & \mbox{[$h_t()$ sends $\alpha_t$ to the value where $f_t() = g_t()$]} \\[24pt]
\end{eqnarray*}
\begin{eqnarray*}
\bullet \; f_t(\alpha_1,\cdots,1) \;\; & \neq &h_t(\alpha_1,\cdots,1)  \;\; \mbox{ and } \;\;
f_t(\alpha_1,\cdots,0) \neq g_t(\alpha_1,\cdots,0)  \\ 
&\Downarrow & \;\;
\mbox{[Impossibility to merge the two functions $\; f_t() \;$ and $\; g_t()
    \;$]} \\
&    \Downarrow & \;\;
\mbox{[No constraint over $\alpha_t$ but a induced constraint over some
  $\alpha_j, \; j<t$]} \\
h_t(\alpha_1,\cdots,\alpha_t) & = &  \alpha_t    \;\;\; \mbox{[as $f_t()+ g_t()=1$ and
   $f_t() \cdot g_t() = 0$]} \\
&\mbox{\bf but} & \;\; 
[f_t(\cdot,0) +  
 g_t(\cdot,0)]  \; \cdot \;   [f_t(\cdot,1) +  g_t(\cdot,1)] = fonction(\alpha_1,\cdots, \alpha_j) = 1 \\
\mbox{\bf and} \;\;\;
  g^*_j(\alpha_1,\cdots, \alpha_j)& :=  & 
  [f_t(\cdot,0) +  
 g_t(\cdot,0)]  \; \cdot \;   [f_t(\cdot,1) +  
 g_t(\cdot,1)] +  g_j(\alpha_1,\cdots, \alpha_j) = 1\\
& &  \; \; \; \mbox{[New additional constraint over $\alpha_j, \;\; j <
      t$, so that the impossibility } \\
& & \; \; \; \; \mbox{cannot occur anymore,
  as  $g_j(\cdot, \alpha_j) \rightarrow  g_j(\cdot, \alpha_j) +1$ when it appears.]}
\end{eqnarray*}
\end{proof}
\noindent {\it Example : } \\
\noindent Consider the following sets of clauses : \\
$\bullet \; \; \varphi = (x_1 \vee x_2 \vee \neg x_3) \wedge (\neg x_2 \vee  x_3
\vee \neg x_4) \wedge (\neg x_1 \vee  x_3
\vee \neg x_4)$  \\ 
$\bullet \;\; \varphi' = (\neg x_1 \vee x_2 \vee \neg x_3) \wedge (\neg x_2 \vee  x_3
\vee \neg x_5) \wedge (\neg x_1 \vee  x_3
\vee \neg x_5)$ \\
\noindent Then 
\begin{eqnarray*} [\varphi] & = &  \left[ \begin{array}{c}
\alpha_1\\
\alpha_2\\
\alpha_1 \alpha_3 + \alpha_2 \alpha_3 + \alpha_1 \alpha_2 \alpha_3\\
\alpha_4 \\
\alpha_5
\end{array}
\right]
\wedge 
\left[ \begin{array}{c}
\alpha_1\\
\alpha_2\\
\alpha_3 \\
\alpha_4 + \alpha_2 \alpha_4 + \alpha_2 \alpha_3 \alpha_4\\
\alpha_5
\end{array}
\right]
\wedge 
\left[ \begin{array}{c}
\alpha_1\\
\alpha_2\\
\alpha_3 \\
\alpha_4 + \alpha_1 \alpha_4 + \alpha_1 \alpha_3 \alpha_4\\
\alpha_5
\end{array}
\right] \\
& = & \left[ \begin{array}{c}
\alpha_1\\
\alpha_2\\
\alpha_1 \alpha_3 + \alpha_2 \alpha_3 + \alpha_1 \alpha_2 \alpha_3 \\
\alpha_4 + \alpha_1 \alpha_4 + \alpha_2 \alpha_4 +  \alpha_1 \alpha_2
\alpha_4  + \alpha_1 \alpha_3 \alpha_4 + \alpha_2 \alpha_3 \alpha_4 +
\alpha_1 \alpha_2 \alpha_3 \alpha_4\\
\alpha_5
\end{array}
\right] 
\end{eqnarray*}
\begin{eqnarray*} [\varphi'] & = &  \left[ \begin{array}{c}
\alpha_1\\
\alpha_2\\
\alpha_3 + \alpha_1 \alpha_3 + \alpha_1 \alpha_2 \alpha_3\\
\alpha_4 \\
\alpha_5
\end{array}
\right]
\wedge 
\left[ \begin{array}{c}
\alpha_1\\
\alpha_2\\
\alpha_3 \\
\alpha_4 \\
\alpha_5 + \alpha_2 \alpha_5 + \alpha_2 \alpha_3 \alpha_5
\end{array}
\right]
\wedge 
\left[ \begin{array}{c}
\alpha_1\\
\alpha_2\\
\alpha_3 \\
\alpha_4 \\
\alpha_5 + \alpha_1 \alpha_5 + \alpha_1 \alpha_3 \alpha_5
\end{array}
\right] \\
& = & \left[ \begin{array}{c}
\alpha_1\\
\alpha_2\\
\alpha_3 + \alpha_1 \alpha_3 + \alpha_1 \alpha_2 \alpha_3 \\
\alpha_4 \\
\alpha_5 + \alpha_1 \alpha_5 + \alpha_2 \alpha_5 +  \alpha_1 \alpha_2
\alpha_5  + \alpha_2 \alpha_3 \alpha_5
\end{array}
\right] 
\end{eqnarray*}
\noindent And \\
\begin{eqnarray*}
[\varphi] \wedge [\varphi'] = \left[ \begin{array}{c}
\alpha_1\\
\alpha_2\\
\alpha_2 \alpha_3 \\
\alpha_4 + \alpha_1 \alpha_4 + \alpha_2 \alpha_4 +  \alpha_1 \alpha_2
\alpha_4  + \alpha_1 \alpha_3 \alpha_4 + \alpha_2 \alpha_3 \alpha_4 +
\alpha_1 \alpha_2 \alpha_3 \alpha_4\\
\alpha_5 + \alpha_1 \alpha_5 + \alpha_2 \alpha_5 +  \alpha_1 \alpha_2
\alpha_5  + \alpha_2 \alpha_3 \alpha_5
\end{array}
\right] \\
\end{eqnarray*}
\noindent In this example, no recursive call is done. We get $h_3(\cdot) :=
(\alpha_1 \alpha_3 + \alpha_2 \alpha_3 + \alpha_1 \alpha_2 \alpha_3) \wedge
(\alpha_3 + \alpha_1 \alpha_3 + \alpha_1 \alpha_2 \alpha_3) \;
\stackrel{\mbox{(\ref{merge})}}{\Rightarrow} h_3(\cdot) := (\alpha_3 + 1) \cdot \{0 \cdot \alpha_2 + 0
\} + \alpha_3 \cdot \{ (\alpha_2 + 1) \cdot 0 + (\alpha_2 + 1) \cdot 0
+ \alpha_2 \} = \alpha_2 \alpha_3$, as $f_3(\alpha_1,\alpha_2,0) = 0$,
$f_3(\alpha_1,\alpha_2,1) = \alpha_1 + \alpha_2 + \alpha_1 \alpha_2$, 
$g_3(\alpha_1,\alpha_2,0) = 0$,
$g_3(\alpha_1,\alpha_2,1) = 1 + \alpha_1 + \alpha_1 \alpha_2$,
$f_3(\alpha_1,\alpha_2,1) + g_3(\alpha_1,\alpha_2,1) = \alpha_2 + 1$ 
and $f_3(\alpha_1,\alpha_2,1) \cdot g_3(\alpha_1,\alpha_2,1) =
\alpha_2$. \\ 
\section{An approach to the 3-CNF-SAT problem via descriptors}
\subsection{Boolean descriptors}
\noindent The usual presentation of a 3-CNF-SAT problem consists of a
list on $m$ 3-CNF clauses defined over $n$ propositional variables.
These $m$ clauses describe perfectly the set of solutions for the
3-CNF-SAT problem and can be considered as {\bf  Boolean
  descriptors} of the 3-CNF-SAT problem.  These Boolean descriptors 
are of {\bf linear complexity} as
they can be represented by an array of dimension $3 \times m$.\\

\noindent  The difficulty with such Boolean descriptors is that there
is {\bf no simple or direct relation} between them and the set of solutions or the answer to the
satisfiability question. 
\subsection{3-CNF-matrix descriptors}
\noindent This paper proposes in (\ref{no3}) a 3-CNF-matrix
description of a 3-CNF-SAT problem.  These descriptors (each line in
the 3-CNF-matrix) can be of {\bf exponential complexity} as there are as
many descriptors as solutions. Even if one uses the reduction version of the 3-CNF-matrix
description as explained in (\ref{no5}), simulations show that the complexity remains
exponential.

\noindent The interest of these descriptors is the {\bf
  direct link} between them and the set of solutions or the answer to
the satisfiability question.
\subsection{Functional descriptors}
\noindent This paper proposes also in (\ref{h-def}) a 3-CNF-matrix {\bf
  functional description} for any 3-CNF-SAT problem.  These functional
descriptors are of {\bf unknown complexity}, at least at this stage of
the paper.

\noindent These functional descriptors are {\it somehow in between} both
previous types of descriptors, as they are in an {\bf exponential relation}
 to the set of solutions and in an {\bf direct relation}
 with the satisfiability question.  Indeed, given the
functional descriptors, it is straightforward to give the answer to
the satisfiability question : no if the functional descriptors does
not exist, and yes otherwise.  However, one needs to generate all 
possible values for  $\alpha_i$ to get the entire set of solutions,
and that can take an exponential time. \\

\noindent {\it {\bf Conclusion} : The approach of the 3-CNF-SAT problem via 
  functional descriptors seems to be promising as it does not
  consider the set of all solutions, but only focuses on the sole
  question about satisfiability.}
\pagebreak
\section{Complexity analysis of the functional descriptor approach}
\subsection{A first measure of the complexity for functional
  descriptors}
\begin{theorem} {\bf The complexity of the functional descriptor
    approach for a 3-CNF-SAT problem with $m$ clauses and $n$
    propositional variables is} 
\[{\cal
  O}( m \; n^2 \; \max_{1 \leq t \leq n} \; \max_{1 \leq j \leq m} \len_j(h_t) \; ) \]
where $\len_j(h_t)$ is the number of terms in $h_t(\cdot)$ when
the  $j$ first
  clauses are considered : 
\begin{eqnarray}
\mbox{Let } \; \len(h_t) &\equiv& \sum_{(\delta_1,\cdots,\delta_t)\in
  \{0,1\}^t} \Delta_t(\delta_1,\cdots,\delta_t) \;\; \mbox{[ see
  (\ref{delta-def}) for the definition of $\Delta_t$ ]} \label{len} \\
\mbox{So } \; \len_j(h_t) &=& \len(h_t) \;\; \mbox{ at stage
  $j$ of the computations.} \nonumber 
\end{eqnarray}
\end{theorem}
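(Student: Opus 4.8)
The plan is to account for the cost of processing the $m$ clauses one at a time, maintaining at each stage the vector of functional descriptors $[h_1,\dots,h_n]$. Adding the $j$-th clause means performing, for each variable index $t$, the merging operation of Theorem 4.2, i.e. computing $h_t := f_t \wedge g_t$ via equation~(\ref{merge}). So the top-level structure is: (outer loop over $m$ clauses) $\times$ (loop over $n$ descriptor positions $t$) $\times$ (cost of one merge at position $t$). This already delivers the leading factor $m\,n$, and the remaining task is to bound the cost of a single merge, including the recursive calls triggered by~(\ref{recursive}), by ${\cal O}\!\left(n\;\max_{t}\max_{j}\len_j(h_t)\right)$.

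First I would bound the cost of \emph{one} invocation of~(\ref{merge}) in the absence of recursion. The formula for $h_t$ is built from the four polynomials $f_t(\cdot,0),f_t(\cdot,1),g_t(\cdot,0),g_t(\cdot,1)$, each a modulo-2 multilinear polynomial in $\alpha_1,\dots,\alpha_{t-1}$ (Corollary 4.1); its evaluation requires a constant number of polynomial additions and products, followed by reduction modulo $\alpha_i^2=\alpha_i$. Since each such polynomial has at most $\len_j(h_t)$ terms, and the product of two multilinear polynomials in $\le n$ variables, after collecting like terms, again has a number of terms controlled by the same quantity (using $\alpha_i^2 = \alpha_i$ and $\alpha_i+\alpha_i=0$ to keep the representation reduced), one merge costs ${\cal O}(n \cdot \len_j(h_t))$ — the extra factor $n$ being the per-monomial bookkeeping cost of handling up to $n$ variables. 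Combined with the outer $m\,n$ this gives $m\,n^2\max_t\max_j\len_j(h_t)$.

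The main obstacle is the recursive cascade of~(\ref{recursive}): when the two functions cannot be merged at position $t$, a new constraint $g^*_j$ is pushed back to the (unique) highest relevant index $j<t$, triggering a fresh merge $f_j \wedge g^*_j$, which may itself recurse. I would argue that each such recursive call is directed to a \emph{strictly smaller} index $j$, so along any chain the index decreases, giving at most $n$ recursive calls per initial merge; moreover the termination clause stated after~(\ref{recursive}) — recursion stops when $g^*_j$ is no longer a function of any $\alpha_i$ — ensures no call is repeated at the same index with the same effect. Hence the total work for processing clause $j$ at position $t$, recursion included, is at most $n$ times the single-merge cost, i.e. ${\cal O}(n^2 \len_j(h_t))$; the claimed bound then follows after summing over $t$ and $j$ and replacing each $\len_j(h_t)$ by its maximum. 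The delicate point to get right is precisely that the recursion depth is ${\cal O}(n)$ rather than something larger, and that each recursive merge does not itself blow up the term count beyond $\max_t\max_j\len_j(h_t)$ — which is legitimate here only because that maximum is taken over all stages $j$ and all positions $t$, hence already absorbs any intermediate growth.
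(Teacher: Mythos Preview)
Your skeleton matches the paper's: an outer loop over $m$ clauses, then $n$ descriptor positions, then a bound on recursion depth by $n$. But the step where you bound the cost of a single merge diverges from the paper and, as written, over-counts by a factor of $n$.

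The paper's key move is to show that the cost of one evaluation of~(\ref{merge}) and the length of its \emph{output} are the same quantity in order: both are ${\cal O}\big([\len(f_t)\cdot\len(g_t)]^2\big)$ (the ``equivalence between~(\ref{len_h}) and~(\ref{complexity})''). This cost-equals-output-size identification is precisely what licenses replacing the per-merge cost by $\max_{t,j}\len_j(h_t)$: whatever a merge costs, that cost is already recorded as the length at some stage. With that in hand, the count is simply $m$ (clauses) $\times$ $n$ (descriptors) $\times$ $n$ (recursion depth) $\times$ $\max\len$.

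You do not use this equivalence. Instead you bound one merge by ${\cal O}(n\cdot\len_j(h_t))$ via a ``per-monomial bookkeeping'' factor, and \emph{then} separately multiply by $n$ for recursion. Together with the outer $m\cdot n$ loop this gives $m\,n^3\max\len$, not $m\,n^2\max\len$; your final sentence (``summing over $t$ and $j$'' after asserting ${\cal O}(n^2\len_j(h_t))$ per position) literally produces the larger bound. Either the loop over positions and the recursion are the same factor of $n$ --- in which case you have counted it twice --- or the bookkeeping $n$ must go, and you need the paper's argument that one merge costs ${\cal O}(\len(h_t))$ outright. Relatedly, your claim that polynomial products ``again have a number of terms controlled by the same quantity'' is exactly the point that needs proof: the paper makes it precise by computing $\len(h_t)={\cal O}([\len(f_t)\cdot\len(g_t)]^2)$ and then identifying this with $\len_j(h_t)$ at the appropriate stage.
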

\mbox{}\\
\begin{proof}
Let us compute the complexity of $f_t(\cdot) \wedge g_t(\cdot)$ in
(\ref{merge}).  First of all, one has to compute the four functions in square
brackets :
$[f_t(\cdot,0)+g_t(\cdot,0)]$, $[f_t(\cdot,1)+g_t(\cdot,1)]$, $[f_t(\cdot,0) \cdot
g_t(\cdot,0)]$ and $[f_t(\cdot,1) \cdot g_t(\cdot,1)$].  We have :
\begin{eqnarray*}
\len(f_t(\cdot,0)) \leq \len(f_t(\cdot,\alpha_t)) \;\; &\mbox{and}& \;\;
\len(f_t(\cdot,1)) \leq \len(f_t(\cdot,\alpha_t)) \; [\equiv \len(f_t)] \\
\len(g_t(\cdot,0)) \leq \len(g_t(\cdot,\alpha_t)) \;\; &\mbox{and}& \;\; 
\len(g_t(\cdot,1)) \leq \len(g_t(\cdot,\alpha_t)) \; [\equiv \len(g_t)] \\
\len(f_t + g_t)  \leq \len(f_t) +
\len(g_t) &\leq& 
\len(f_t) \cdot \len(g_t) \;\;\;\; \mbox{when} \;\;
\len(f_t) > 2 \mbox{ and } \len(g_t) > 2
\end{eqnarray*} 
The complexity for the four functions is then ${\cal O}(
\len(f_t) \cdot \len(g_t ))$. \\
\noindent The complexity for computing $h_t(\cdot)$ in (\ref{merge})
is :
\begin{eqnarray}
{\cal O}(&3 \; \cdot& [( \len(f_t) \cdot \len(g_t) )^2 + ( \len(f_t) \cdot
\len(g_t) ) ] + 2 \cdot [ 2 ( \len(f_t) \cdot \len(g_t) )^2 + ( \len(f_t) \cdot
\len(g_t) )] ) \nonumber \\
&=&  {\cal O}( 7 \cdot ( \len(f_t) \cdot \len(g_t) )^2 + 5 \cdot ( \len(f_t) \cdot
\len(g_t) ) ) \nonumber \\
&=& {\cal O}( [\len(f_t) \cdot \len(g_t)]^2 ) \;\;\; \mbox{ for large }
\len(f_t) \cdot \len(g_t)   \label{complexity}
\end{eqnarray}
\noindent Note : it needs three
runs over the formula in the brackets to do the product with $(\alpha_t +1)$,
one to compute the formula, one to multiply it by $\alpha_t$ and one to
add both results. Similarly, it takes two runs to compute the product with $\alpha_t$.\\[12pt]
\noindent Using the same argumentation, we have :
\begin{eqnarray}
\len(h_t) &=& {\cal O}( [\len(f_t) \cdot \len(g_t)]^2 ) \;\;\; \mbox{ for large }
\len(f_t) \cdot \len(g_t)   \label{len_h} \\
& &\mbox{ and  for the recursive call with } j < t  \;\;\; \mbox{[see
  (\ref{recursive})]} \nonumber \\
 \len(g_j^*) &=& {\cal O}( [\len(f_t) \cdot \len(g_t)]^2 ) \;\;\; \mbox{ for large }
\len(f_t) \cdot \len(g_t)   \label{len_g}
\end{eqnarray}
\noindent To solve the 3-CNF-SAT problem, one should compute all $n$
functional descriptors $h_t(\cdot)$, each of them with at most $n$
recursive calls, and this for each step of integration of
the $m$ clauses.  So, using the equivalence between (\ref{len_h}) and (\ref{complexity}), the overall complexity of the functional
approach to 3-CNF-SAT problem will be of order 
{$\displaystyle {\cal
  O}( m \; n^2 \;\max_{1 \leq t \leq n} \; \max_{1 \leq j \leq m} \len_j(h_t) \; )$}.
\end{proof}
\subsection{Non uniformly distributed versus uniformly distributed
  literals in 3-CNF-SAT problems}
\begin{theorem}
{\bf The most difficult 3-CNF-SAT problems are uniformly distributed ones.}
\end{theorem}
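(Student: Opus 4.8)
The plan is to argue that, among all 3-CNF-SAT instances with fixed parameters $n$ and $m$, the quantity that governs the complexity bound of Theorem~6.1 --- namely $\max_{t}\max_{j}\len_j(h_t)$, the maximal number of monomials appearing in a functional descriptor during the incremental processing of the clauses --- is maximised when the literals are drawn uniformly, i.e. when each clause picks its three variables and their signs so that no variable is favoured over another. The key observation is that $\len_j(h_t)$ grows precisely because of the recursive merging described in (\ref{merge})--(\ref{recursive}): whenever a new clause $\psi_j$ forces, through $g^*_j$, an additional constraint on some earlier variable $\alpha_j$, the signature support of $h_j$ can expand, and the blow-up (\ref{len_h})--(\ref{len_g}) is of the form $\len \mapsto \mathcal{O}(\len^2)$. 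So the worst case is the one in which these recursive cascades are as long and as frequent as possible.

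First I would formalise the notion of a ``distribution of literals'' for a fixed $(n,m)$: a clause specifies an ordered triple $r<s<t$ of variable indices together with three signs, and an instance is a multiset of $m$ such clauses; a \emph{non-uniform} instance is one in which some variable index appears in strictly more (or strictly fewer) clauses, or in a biased sign pattern, than it would under the uniform choice $\alpha \approx m/n$ evenly spread. Then I would show two monotonicity facts. (i) If a variable $x_i$ appears in very few clauses, then after processing all $m$ clauses the descriptor $h_i$ depends on very few of the $\alpha_1,\dots,\alpha_{i-1}$, hence $\len(h_i)$ stays small and, more importantly, merging against $h_i$ in a later recursive call terminates quickly (the recursion in (\ref{recursive}) ends ``as soon as $g^*_j$ is no longer a function of $\alpha_j$''). (ii) If the clauses are concentrated on a small subset of variables, then by the extension mechanism of Section~3 the remaining variables carry the trivial descriptor $h_i(\cdot)=\alpha_i$, which never triggers a recursive merge; the effective problem size shrinks, and $\len$ is controlled by the number of ``active'' variables, not by $n$. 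In both cases a non-uniformity can be removed --- by a rebalancing/swapping argument on the clause multiset --- without ever decreasing $\max_t\max_j\len_j(h_t)$, and strictly increasing it when the imbalance was strict. Iterating this exchange argument drives any instance toward the uniform one, which is therefore the maximiser.

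The final step is to package this as the claimed theorem: by Theorem~6.1 the algorithm's running time is $\mathcal{O}(m\,n^2\,\max_t\max_j\len_j(h_t))$, and the exchange argument shows this bound is largest precisely for uniformly distributed instances; hence the hardest inputs for the functional-descriptor algorithm are the uniform ones, which matches the empirical observation (cited via \cite{Crawford199631}) that $\alpha\approx 4.258$, the balanced clause/variable ratio, yields the hardest 3-CNF-SAT problems.

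The hard part, I expect, will be step~(i) above: rigorously controlling how a \emph{single} clause addition can lengthen a descriptor and, crucially, bounding the \emph{depth} of the recursive cascade (\ref{recursive}) in terms of how ``spread out'' the variable occurrences are. The blow-up estimate (\ref{len_h}) is only $\len\mapsto\mathcal{O}(\len^2)$ per merge, but a long recursive chain could iterate it, and one must show that non-uniform instances genuinely produce shorter chains --- that a variable touched by only a handful of clauses cannot be the hinge of a deep recursion. Making the exchange argument monotone (never decreasing the complexity witness) while handling the reordering of lines forced by the disjunction operator (the parenthetical remark after the definition of $\vee$ in Section~3) is the delicate combinatorial core; everything else is bookkeeping on top of Theorems~5.1, 5.3 and~6.1.
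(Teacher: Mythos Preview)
Your approach differs substantially from the paper's. The paper does not run an exchange argument on instances; instead it (a) first observes from (\ref{def_h}) that clauses whose highest-indexed literal is negative are cheaper than those where it is positive, so sign-balanced problems are hardest and one may assume each variable occurs with equal positive/negative frequency; (b) partitions the clauses into $Cl^+(x_t)$ and $Cl^-(x_t)$ by the sign of the highest-indexed literal, introduces $V(x_t)=V^+(x_t)\cup V^-(x_t)$ as the set of lower-indexed variables co-occurring with $x_t$ as highest, and notes that within $Cl^+$ or $Cl^-$ alone no recursive call (\ref{recursive}) can arise, giving $\len(h_t)_{|Cl^\pm}\le 2^{\#V^\pm(x_t)+1}$; (c) then tracks the merging $h_t(\cdot)_{|Cl^+}\wedge h_t(\cdot)_{|Cl^-}$ and the induced recursive calls down through $t=n,n-1,\dots$ to obtain the explicit bound $\len(h_t\wedge g^*_t)\le \min\bigl(2^{\#[\bigcup_{i=t}^{n} V(x_i)]+1-(n-t)},\,2^t\bigr)$, namely (\ref{len_h_g}); and (d) concludes by observing that the sets $V(x_i)$ depend on the \emph{labelling} of the variables, so for a non-uniform instance one can relabel to shrink $\#[\bigcup_{i=t}^{n} V(x_i)]$, whereas in the uniform case every labelling looks the same and no such reduction is available. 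The argument is therefore about relabelling a fixed instance, not about moving between instances.

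Your exchange argument has a real gap at its central step: you assert that rebalancing toward uniformity ``never decreases $\max_t\max_j\len_j(h_t)$'', but offer no mechanism for this monotonicity, and it is easy to imagine a swap that removes a clause from an over-represented variable and reassigns it to an under-represented one yet \emph{lowers} the maximal descriptor length --- for instance if the removed clause was exactly the one triggering a deep recursive chain in (\ref{recursive}). The paper avoids this difficulty entirely by never altering the clause multiset: it bounds $\len$ by a combinatorial quantity, $\#\bigcup_{i\ge t} V(x_i)$, that is invariant under the instance but variant under relabelling, and then characterises uniform instances as precisely those for which this quantity cannot be reduced by any relabelling. If you wish to rescue an exchange approach you would need an analogous invariant that is provably monotone under your swaps, and nothing in your proposal identifies one; you also omit the sign-balance step (a), which the paper treats as a separate preliminary reduction.
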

\noindent {\it Note} : The invariance structure of
3-CNF-SAT problem is important with respect to the complexity of the
functional descriptor approach.  It is then normal that problems with uniformly
distributed propositional variables are harder as no re-labeling of
the variables can be done to reduce the complexity.  A simple example
of the importance of re-labeling
is proposed just  after the following proof.  \\[12pt]
\begin{proof}
\mbox{}\\
$\bullet$ {\bf Negative and positive literals} \\ 
First, let us note that the computations involving negative
literals are easier and faster than for positive
ones.  This is a mere consequence of our definition for
$h_t(\cdot)$ in (\ref{def_h}). So {\it the most difficult problems will be
the balanced one with respect to the proportion of negative and
positive literals}. Otherwise, we inverse some variables in
order to get the maximum of negative literals. Let us suppose from here that
the proportion of positive and negative literals is quasi equal for each variable.   \\[12pt]
$\bullet$ {\bf Some definitions} \\
Let us divide now the clauses in two sets. The first set
contains all the clauses with the higher indexed literal being
positive and the second with the negative ones~:
\begin{eqnarray*} 
Cl^+ &=& \bigcup_t \;\; Cl^+(x_t) \\
&=& \bigcup_t \; \{ \psi_i := [\neg] x_r \vee [\neg] x_s \vee
x_t \; \mbox{  with  } \; r < s < t, \; \mbox{ or any permutation of  } \; x_r,
x_s, x_t \}  \\
Cl^- &=& \bigcup_t \;\; Cl^-(x_t) \\
 &=& \bigcup_t \; \{ \psi_i := [\neg] x_r \vee [\neg] x_s \vee
\neg x_t \; \mbox{  with  } \; r < s < t, \; \mbox{ or any permutation of  } \; x_r,
x_s, x_t \} 
\end{eqnarray*}
\begin{eqnarray*}
\mbox{ and } \; V^+(x_t) &=& \{x_i  \; (i < t)  \; | \; \exists  \; \psi  \; \in Cl^+(x_t) : x_i \;
\mbox{ appears in } \; \psi\}  \\  
V^-(x_t) &=&  \{x_i  \; (i < t)  \; | \; \exists  \; \psi  \; \in Cl^-(x_t) : x_i \;
\mbox{ appears in } \; \psi\} \\
V(x_t) &=& V^+(x_t) \cup V^-(x_t) 
\end{eqnarray*}
By construction, there is at least one solution for each $x_t$ when
considering clauses only in $Cl^+ [x_t = 1]$ or in $Cl^- [x_t=0]$.
Moreover, the computation of the functional descriptors will not
involved recursive calls [see (\ref{recursive})] as no impossibility exists
for any $x_t$.\\[12pt]
As $h_t(\cdot)$ is a multi-linear
combination of the $\alpha_i$ corresponding to the literals $x_i$ ($i
\leq t$) found in clauses where $x_t$ has the highest index, there will
be at most $2^{(\#V^+(x_t) \;+\;1)}$ [respectively $2^{(\#V^-(x_t) \;+\;1)}$] terms in
$h_t(\cdot)$ computed over $Cl^+$ [resp. $Cl^-$]. So
$\len(h_t)_{|Cl^+} \leq 2^{(\#V^+(x_t) \;+\;1)}$ and
$\len(h_t)_{|Cl^-} \leq 2^{(\#V^-(x_t) \;+\;1)}$. \\[12pt]
$\bullet$ {\bf The merging of $Cl^+$ and $Cl^-$}. \\
 One needs to compute $h_t (\cdot)  \;  :=  \;  h_t(\cdot)_{|Cl^+}
 \wedge h_t(\cdot) _{|Cl^-}$. \\
$\diamond$ For $t=n$ and considering that $h_n(\cdot)$ is a multi-linear combination of the
$\alpha_i$ appearing in $h_n(\cdot)_{|Cl^+}$ or $h_n(\cdot)
_{|Cl^-}$, we have that :
\begin{eqnarray} \len(h_n) &\leq& 2^{\#(V^+(x_n) \; \cup \; V^-(x_n))
  \;+\;1} \nonumber \\
&\leq& 2^{\#V(x_n)   \;+\;1} \label{merging_complexity} 
\end{eqnarray}
$\diamond$ For $t=n-1$ also, $h_{n-1}(\cdot)$ is a multi-linear
combination of the $\alpha_i $ corresponding to the variables $x_i \; \; (i < n-1)$
found in common clauses with $x_{n-1}$ as the highest variable~: $\len(h_{n-1}) \leq 2^{\#V(x_{n-1})  \;+\;1}$.  
 {\bf But} one has {\bf to add} the potential $\alpha_i$
involved in the recursive call $g^*_{n-1}(\cdot)$  from the previous computation of
$h_n(\cdot)$. [see
(\ref{recursive})] \\[12pt]
\mbox{  }$\;\;$ From (\ref{recursive}), $g^*_{n-1}(\alpha_1, \cdots, \alpha_{n-1}) =
[h_n(\cdot,0)_{|Cl^+} + h_n(\cdot,0)_{|Cl^-}] \cdot
[h_n(\cdot,1)_{|Cl^+} + h_n(\cdot,1)_{|Cl^-}]$.  So $g^*_{n-1}(\cdot)$ will be a
multi-linear combination of the same $\alpha_i$, except $\alpha_n$, as
for $h_n(\cdot)$.  Therefore, $h_{n-1}(\cdot) \wedge g^*_{n-1}(\cdot)$
will be a combination of the $\alpha_i$ associated to the variables in
$ \bigcup_{i=n-1}^{n} \; [V^+(x_{i}) \; \cup \; V^-(x_{i})]
\Rightarrow  \len(h_{n-1} \wedge g^*_{n-1}) \leq
2^{\#[\bigcup_{i=n-1}^{n}\; V(x_{i}) ]}$, as $x_{n-1}$ should not be
counted in $V(x_n)$. \\[12pt]
$\diamond$ So, $\forall \; t \; : \; \len(h_{t} \wedge g^*_{t}) \leq
2^{\#[\bigcup_{i=t}^{n}\; V(x_{i}) ]\;+\; 1-(n-t)}$.  But as $h_{t} \wedge g^*_{t}$ is a combination of at most
$t$ $\alpha_i$'s, we have :
\begin{eqnarray}
 \len(h_{t} \wedge g^*_{t}) \leq
\min (\; 2^{\#[\bigcup_{i=t}^{n}\; V(x_{i})]
  \;+\;1-(n-t)} \; , \; 2^t \;) \label{len_h_g} 
\end{eqnarray}
$\bullet$ {\bf Uniform distribution of the literals} \\
As $V(x_{i})$ is dependent of the ordering of
the variables, it is possible to re-order the variables so that 
$2^{\#[\bigcup_{i=t}^{n}\; V(x_{i})]}$ is
  minimal, except in the case of uniformly distributed literals. {\bf
    The uniformly distributed case is then the most difficult problem},
    as no ordering can reduce the maximum value in (\ref{len_h_g}). \\
\end{proof}
\mbox{}\\
\noindent {\it Example of non uniformly distributed 3-CNF-SAT problem  : } \\ 
$\bullet$ Consider the following 3-CNF-SAT problem with $m$ clauses and $2 m + 1$ propositional variables : 
\begin{eqnarray*}  
\varphi := \bigwedge_{1 \leq i \leq m} (x_{2i-1} \vee x_{2i} \vee x_{2
  m + 1}) 
\end{eqnarray*}
Here we have : 
\begin{eqnarray*}
V^+(x_t) &=& V^-(x_t) =  \emptyset \; \; \forall \; t \neq 2 m + 1 \\
V^+(x_{2  m + 1}) &=& \{x_1, \cdots, x_{2 m} \} \;\; \mbox{ and } \;\;
V^-(x_{2 m + 1}) = \emptyset \\
\max_{1 \leq t \leq n} \;
    \max_{1 \leq j \leq m} \len_j(h_t) \; &=& 3^{m+1} - 3^m + 1 = {\cal
      O}(3^m) 
\end{eqnarray*}
Note : The proof of this equality is more difficult than
  interesting, so we do not write it here. \\[12pt]
$\bullet$ But the {\it same} 3-CNF-SAT problem can be formalized in terms of opposite
literals \linebreak $y_i = \neg x_i \;,\; \forall i \in \{1,
\cdots, 2 m +1\} $ :
\begin{eqnarray*}  
\varphi := \bigwedge_{1 \leq i \leq m} (\neg y_{2i-1} \vee \neg y_{2i}
\vee \neg y_{2
  m + 1}) 
\end{eqnarray*}
This time, we have : 
\begin{eqnarray*}
V^+(y_t) &=& V^-(y_t) =  \emptyset \; \; \forall \; t \neq 2 m + 1 \\
V^+(y_{2  m + 1}) &=& \emptyset \;\; \mbox{ and } \;\;
V^-(y_{2 m + 1}) = \{y_1, \cdots, y_{2 m} \}  \\
\max_{1 \leq t \leq n} \;
    \max_{1 \leq j \leq m} \len_j(h_t) \; &=& 2^m = {\cal
      O}(2^m) 
\end{eqnarray*}
\begin{eqnarray*}
\mbox{Indeed,} \;\;\; h_t(\alpha_1, \cdots, \alpha_t) &=& \alpha_t \;\;
\forall \; t < 2m+1 \\
\mbox{and} \;\;\; h_{2m+1}(\alpha_1,\cdots,\alpha_{2m+1}) &=&
[(\alpha_1 \alpha_2 \alpha_{2m+1} + \alpha_{2m+1})
\wedge (\alpha_3 \alpha_4 \alpha_{2m+1}  + \alpha_{2m+1})]
\wedge \cdots \\
&\stackrel{see(\ref{merge})}{=}& [(\alpha_{2m+1}+1) \cdot 0 + (\alpha_{2m+1}) \cdot (\alpha_1
\alpha_2 + 1) \cdot (\alpha_3 \alpha_4 + 1)] \wedge \cdots \\
&=& (\alpha_{2m+1}) \cdot \prod_{i=1}^{m} (\alpha_{2i-1} \;
\alpha_{2i} + 1) \\
&\Rightarrow& \len_{2m+1}\;(h_{2m+1}) = 2^m \\
\end{eqnarray*}
$\bullet$ Finally, the {\it same} 3-CNF-SAT problem can be formalized using
re-ordered propositional variables $z_1 = y_{2 m + 1}$ and $z_{i} =
y_{i-1} \; , \; \forall i \in \{2,\cdots,2 m + 1\}$ :
\begin{eqnarray*}  
\varphi := \bigwedge_{1 \leq i \leq m} (\neg z_{2i}
\vee \neg z_{2 i +1} \vee \neg z_1) 
\end{eqnarray*}
And, this time, we have :
\begin{eqnarray*}
V^+(z_t) &=&  \emptyset \; \; \forall \; t \\ 
V^-(z_t) &=& \left\{ \begin{array}{l} 
\emptyset \;\; \mbox{ for } \;\; t = 1 \;\mbox{ or } \; t = 2 i \; \; (1 \leq i \leq m) \\
\{z_1, z_{2i}\}  \; \; \mbox{ for } \;\; t = 2 i + 1 \; \; (1 \leq i
\leq m) 
\end{array}
\right.\\
\max_{1 \leq t \leq n} \;
    \max_{1 \leq j \leq m} \len_j(h_t) \; &=& 2 
\end{eqnarray*}
So for this example, one can reach a {\bf linear complexity}
      of  ${\cal O}(2 \cdot \mbox{number of } h_t(\cdot)) = {\cal
      O}(2 \; m)$,  as only one $h_t(\cdot)$ has to be computed at
  each step without any recursive call. \\[12pt]
{\it Example of uniformly distributed 3-CNF-SAT problem :} \\
The smallest exact uniformly distributed and optimally re-ordered 3-CNF-SAT problem is~:
\begin{eqnarray*}
\varphi := \bigwedge_{i=1}^{8} \psi_i = \bigwedge \left\{ \begin{array}{l}
x_1 \vee \neg x_2 \vee \neg x_3 \\
 x_1 \vee  x_2 \vee \neg x_3 \\
\neg x_1 \vee \neg x_2 \vee \neg x_3 \\
\neg x_1 \vee  x_2 \vee \neg x_3 \\
x_1 \vee \neg x_2 \vee  x_3 \\
x_1 \vee x_2 \vee x_3 \\
\neg x_1 \vee \neg x_2 \vee x_3 \\
\neg x_1 \vee x_2 \vee x_3 
\end{array}
\right.
\end{eqnarray*}
No relabeling will reduced the 3-CNF-SAT complexity.  This problem is
{\it ``hard''} in the sense that  each clause eliminates only one
solution at a time. We
have here  :
\begin{eqnarray*}
V^+(x_1) &=& V^-(x_1) = \emptyset \\
V^+(x_2) &=& V^-(x_2) = \emptyset \\
V^+(x_3) &=& V^-(x_3) = \{x_1,x_2\} \\
&\mbox{and}&
\end{eqnarray*}
\begin{eqnarray*}
\left[ \begin{array}{llllcc}
\mbox{Step \hspace{1cm} } &h_1(\cdot)&h_2(\cdot)&h_3(\cdot) & \max_t
\len(h_t) & \# \mbox{Solutions}\\
\psi_1& \alpha_1 & \alpha_2 &
(\alpha_1 + 1) \alpha_2 \alpha_3 +\alpha_3 & 3 &7 \\
\psi_1 \wedge \psi_2 &\alpha_1 &  \alpha_2 &
\alpha_1 \alpha_3 &  1 & 6\\
\wedge_{i=1}^{3} \psi_i & \alpha_1 & \alpha_2
& \alpha_1 \alpha_2 \alpha_3 + \alpha_1 \alpha_3 & 2 & 5\\ 
\wedge_{i=1}^{4} \psi_i  & \alpha_1 & \alpha_2 &
 0 & 1 & 4 \\
\wedge_{i=1}^{5} \psi_i  & \alpha_1 & \alpha_1 \alpha_2 & 0
& 1 & 3 \\
\wedge_{i=1}^{6} \psi_i  & 1 &  \alpha_2 & 0 & 1 & 2\\
\wedge_{i=1}^{7} \psi_i  & 1 & 0 & 0 & 1 & 1\\
\wedge_{i=1}^{8} \psi_i  & \nexists & \nexists & \nexists & 0 & 0  \\
\end{array}
\right]\\[12pt]
\end{eqnarray*}
{\bf Conclusions :} \\
The theorem about uniformity is important as it states : for any non uniformly
distributed 3-CNF-SAT problem $\varphi$
with $m$ clauses and $n$ variables, there exists an uniformly
distributed 3-CNF-SAT problem $\varphi'$ with $m$ clauses and $n$ variables which
is more difficult to solve, in terms of functional descriptors.\\
\subsection{A sorting algorithm to reduce complexity}
\noindent We propose the following sorting algorithm of complexity
${\cal O}(m + n \log(n) + m \log(m) )$ :
{\small \textsf{ 
\begin{itemize}
\item[$\diamond$] Relabel the propositional variables [$x_i \rightarrow y_j$] in order to get their
occurrence [${\cal O}(m)$] in a decreasing order :  $\#\{y_1\}$ is maximal, $\cdots$,
$\#\{y_n\}$ is minimal [${\cal O}(n\log(n))$];
\item[$\diamond$] Inverse the sign of the literals in order to get the maximum of
negative literals; 
\item[$\diamond$] Sort the clauses to get a increasing order of the highest variable
in the ordered clauses [${\cal O}(m\log(m))$];
\item[$\diamond$] Within the set of clauses with the same highest variable, sort the
clauses so that the ones with negative highest variable appear before the
ones with positive highest variable.\\ 
\end{itemize}
}  }
\noindent As we seldom have exact uniformly
distributed 3-CNF-SAT problems, the complexity can then be reduced
drastically, as shown in Figure 1.
\begin{figure}[htb]
\centering
\includegraphics[width=6cm]{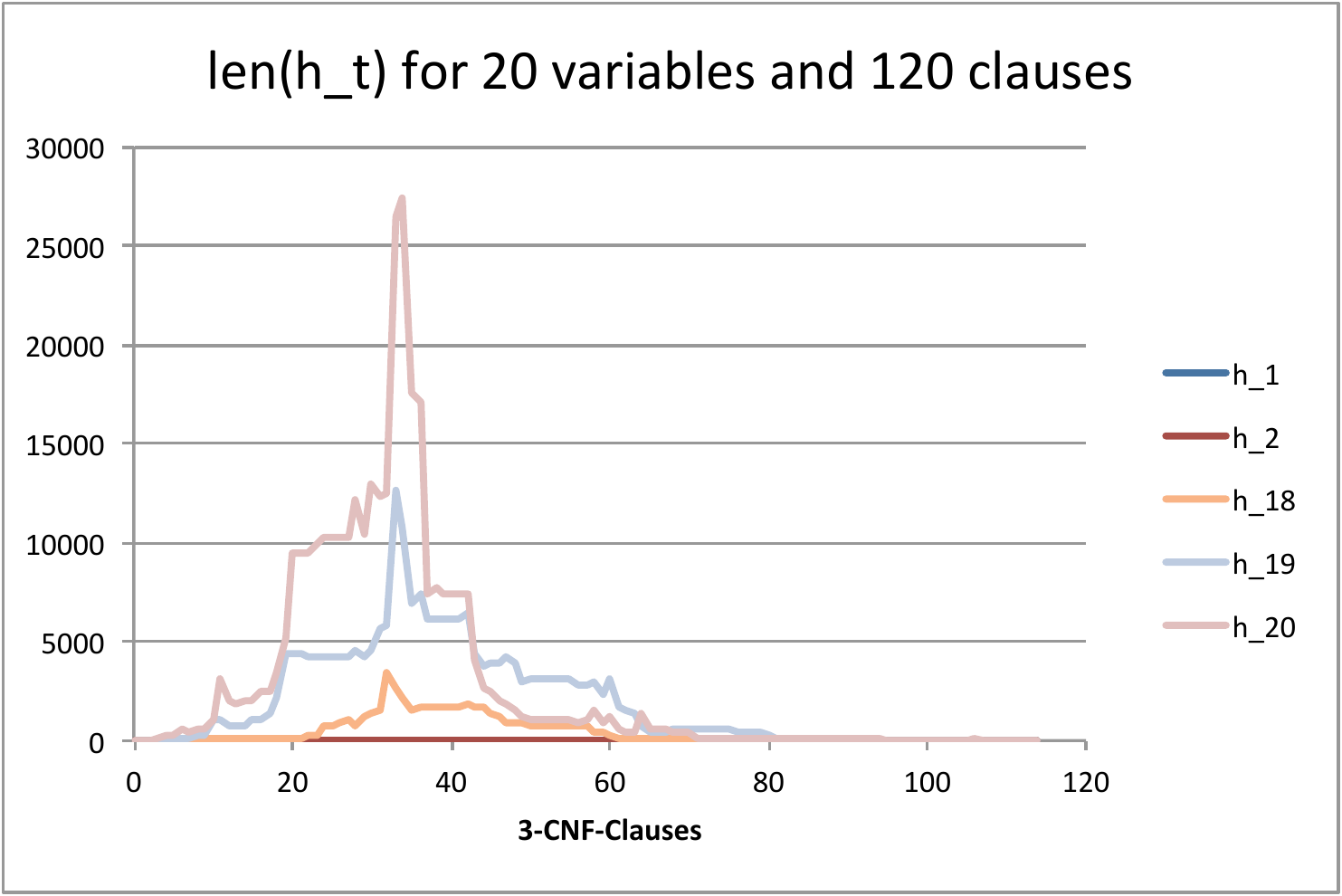} \mbox{$\;\;$}
\includegraphics[width=6cm]{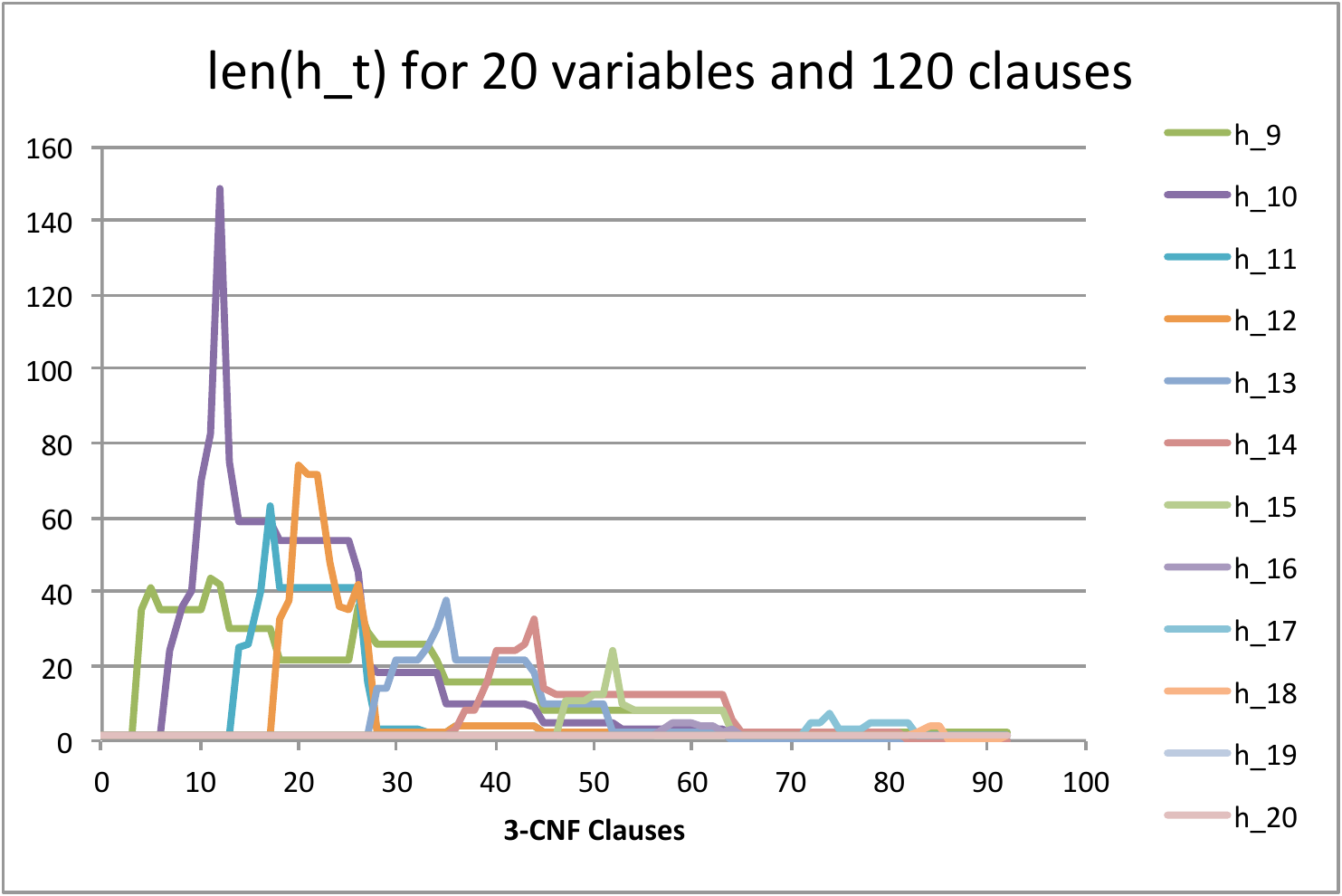} \\ 
{\small Fig. 1 : Complexity for the same dataset before and after the
  sorting algorithm.}
\end{figure}  
\subsection{Exact uniformly distributed $\alpha$-random 3-CNF-SAT problems}
%
{\it Definition} : Let $\varphi$ be a 3-CNF-SAT problem with $n$ variables,
each of them appearing exactly $\frac{3 \alpha}{2} $ times as positive
and $\frac{3 \alpha}{2} $ times as negative literal, for some $\alpha > 0$. Let these $3 \; \alpha \; n$ literals be randomly distributed
amongst the clauses. Such problem is called {\bf \em an  exact uniformly
  distributed $\alpha$-random 3-CNF-SAT problem.} \label{exact_problem} \\[12pt]
\noindent {\bf Remark : } For exact uniformly distributed 3-CNF-SAT problem,
the labeling part of the previous {\it sorting
algorithm} has no effect, as the occurrence of each literal is $\frac{3
  \alpha}{2}$.  Only the re-ordering of the $m$ clauses can reduce
the complexity of the problem.\\[12pt]
\begin{theorem}  
{\bf For such exact uniformly distributed $\alpha$-random 3-CNF-SAT problems,
 the expected number of 
clauses where $i \; (i > 2)$  is the highest index, is noted $m_\alpha(i)$ and given by :}
\begin{eqnarray}
E[\#\{ \psi = [\neg]x_r \vee [\neg]x_s \vee [\neg]x_t |
\max(r,s,t)=i \}] = \frac{(i-1) (i-2)}{(n-1) (n-2)} \; 3 \; \alpha
\equiv m_\alpha(i) \label{m_alpha}
\end{eqnarray}
\end{theorem}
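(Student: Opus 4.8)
The plan is to compute the expectation directly by a counting / indicator argument. Fix an index $i$ with $i > 2$. For each clause $\psi$ in the problem, let $\mathbf{1}_\psi$ be the indicator of the event that the three variables occurring in $\psi$ have maximal index exactly $i$, i.e. one of the three literals is $[\neg]x_i$ and the other two are $[\neg]x_r, [\neg]x_s$ with $r,s < i$. Then $\#\{\psi \mid \max = i\} = \sum_\psi \mathbf{1}_\psi$, and by linearity $E[\#\{\psi \mid \max = i\}] = \sum_\psi \Pr[\mathbf{1}_\psi = 1]$. So the whole computation reduces to finding the probability that a single ``slot'' in a clause, together with its two companions, realizes the pattern ``$x_i$ is present and is the largest index''.

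First I would set up the sampling model implied by the definition of an exact uniformly distributed $\alpha$-random problem: there are $3\alpha n$ literal-occurrences in total (each of the $n$ variables contributing $\frac{3\alpha}{2}$ positive and $\frac{3\alpha}{2}$ negative occurrences), and these $3\alpha n$ occurrences are placed uniformly at random into the $m = \alpha n$ clauses, three per clause. Equivalently, one may think of filling the $3m = 3\alpha n$ clause-positions by a uniformly random matching to the $3\alpha n$ variable-occurrences. Now consider the event that variable $x_i$ appears in a given clause $\psi$ and is the one of largest index there. Counting the occurrences of $x_i$ (there are $3\alpha$ of them among the $3\alpha n$ occurrences), the chance that a designated position of $\psi$ is filled by an $x_i$-occurrence is $\tfrac{3\alpha}{3\alpha n} = \tfrac{1}{n}$; there are $3$ positions in $\psi$, giving (to leading order, neglecting the negligible chance of two $x_i$'s in one clause) probability $\tfrac{3}{n} \cdot 3\alpha \cdot \tfrac{1}{3\alpha}$ — more carefully, the expected number of clauses containing $x_i$ at all is $3\alpha \cdot \tfrac{3m}{3\alpha n}\cdot\tfrac{1}{?}$, so I would instead condition: given that $x_i$ sits in clause $\psi$ (an event whose probability times $m$ gives the expected number of clauses containing $x_i$, which is $3\alpha$ since $x_i$ has $3\alpha$ occurrences spread across distinct clauses with overwhelming probability), the other two positions of $\psi$ are filled by occurrences of variables drawn essentially uniformly from the remaining pool, and the probability that both have index $< i$ is $\bigl(\tfrac{i-1}{n-1}\bigr)\bigl(\tfrac{i-2}{n-2}\bigr)$ up to the usual without-replacement correction — which is exactly $\tfrac{(i-1)(i-2)}{(n-1)(n-2)}$ when one counts variables rather than occurrences, because each variable still has the same occurrence-count. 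Multiplying the expected number $3\alpha$ of clauses containing $x_i$ by this conditional probability $\tfrac{(i-1)(i-2)}{(n-1)(n-2)}$ yields $m_\alpha(i) = \tfrac{(i-1)(i-2)}{(n-1)(n-2)}\,3\alpha$.

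The main obstacle, and the step I would be most careful about, is justifying that ``both companion variables have index $<i$'' has probability precisely $\tfrac{(i-1)(i-2)}{(n-1)(n-2)}$ rather than merely asymptotically so. This requires arguing that, conditioned on one position of $\psi$ holding an occurrence of $x_i$, the multiset of variables available for the other two positions is, after removing one $x_i$-occurrence, still ``balanced'' enough that the two remaining draws behave like sampling two distinct variables without replacement from $\{x_1,\dots,x_n\}\setminus\{x_i\}$ uniformly: the point is that every variable other than $x_i$ still has exactly $\frac{3\alpha}{2} + \frac{3\alpha}{2} = 3\alpha$ occurrences in the residual pool, so by symmetry the probability that a uniformly chosen residual occurrence belongs to any particular variable $x_k$ ($k \ne i$) is equal for all such $k$, and the joint law of the two companion variables is exchangeable and symmetric. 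Hence the probability that the first companion has index $<i$ is $\tfrac{i-1}{n-1}$ and, given that, the second has index $<i$ with probability $\tfrac{i-2}{n-2}$, giving the stated product exactly. (One also notes the contribution from clauses with two or three $x_i$-occurrences is forced out by the ``randomly distributed'' hypothesis — or is simply absorbed, since such a clause cannot have $\max = i$ witnessed in two distinct ways that would double-count — and does not affect the identity.) Assembling these pieces gives (\ref{m_alpha}).
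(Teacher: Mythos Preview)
Your proposal is correct and follows essentially the same approach as the paper. Both arguments condition on an occurrence of $x_i$ in a clause, compute the probability that the two companion variables have index below $i$ as $\binom{i-1}{2}\big/\binom{n-1}{2} = \tfrac{(i-1)(i-2)}{(n-1)(n-2)}$, and then multiply by the $3\alpha$ occurrences of $x_i$; your version is more explicit about the indicator/linearity setup and the symmetry justification for the conditional probability, but the combinatorial core is identical (the paper compresses it to two sentences).
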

\mbox{}\\
\begin{proof} 
Let $\psi$ be a clause with $x_i$ or $\neg x_i$, there are $C_2^{i-1} \cdot \; 3 \; \alpha$ combinations with smaller indices amongst $C_2^{n-1} \cdot \; 3 \; \alpha$
possibles combinations.  So, the probability for $x_i$ to get the
highest index is : $\frac{(i-1) (i-2)}{(n-1) (n-2)}$. The expected
value is obtained by multiplying the probability by the number of
occurrences of $x_i$.\\
\end{proof}
\noindent Figure 2 shows the theoretical density and cumulative distributions of $m_\alpha(i)$
versus the distributions for the observed values for  $\#\{ \psi = [\neg]x_r \vee [\neg]x_s \vee [\neg]x_t |
\max(r,s,t)=i \}$  in the case of a 3-CNF-SAT problem with $175$
variables and $753$ random clauses.  \\
%
\begin{figure}[htb]
\centering
\includegraphics[width=6cm]{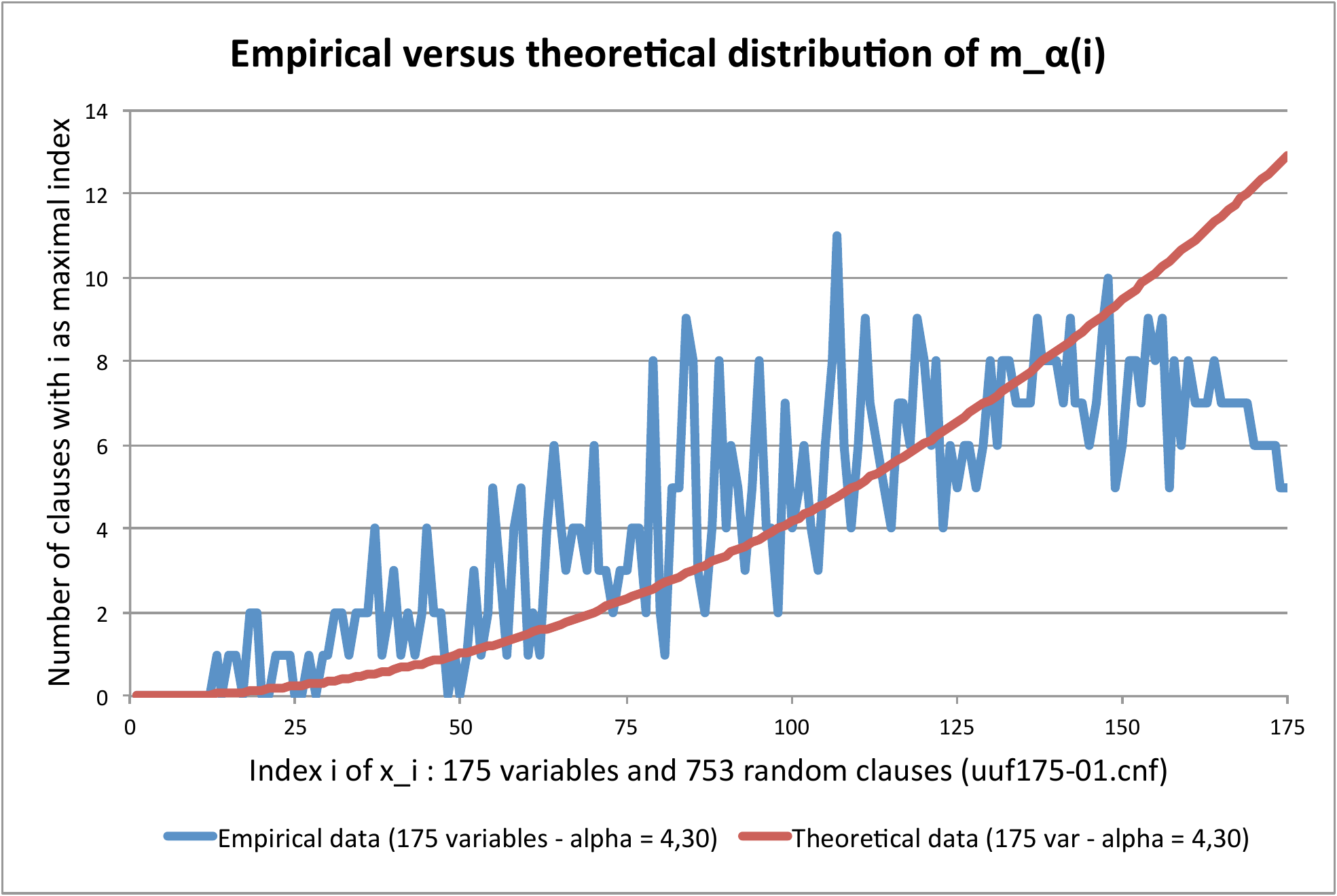}  \mbox{$\;\;$}
\includegraphics[width=6cm]{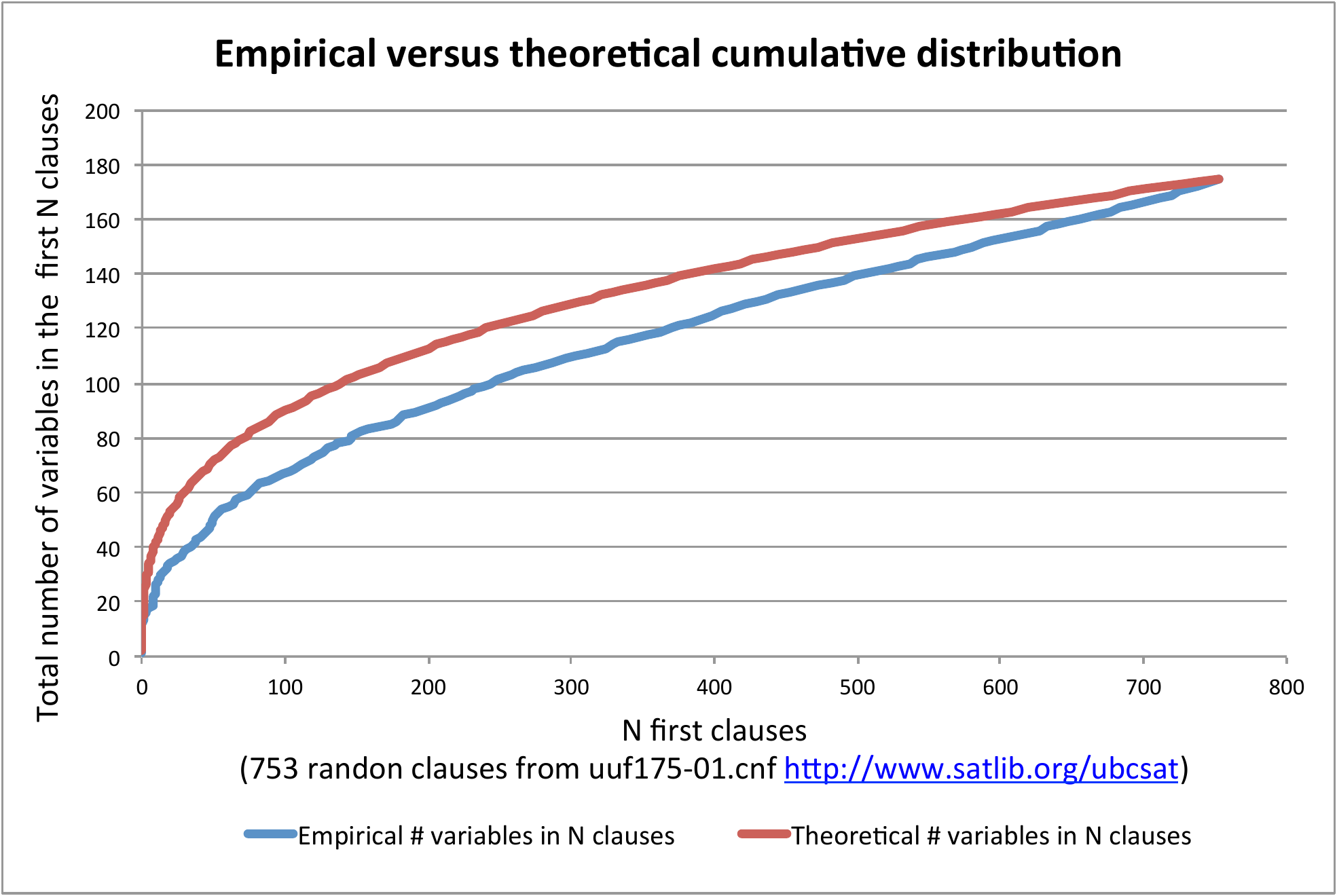} \\
{\small Fig. 2 : Density and cumulative distributions of ``sorted clauses'' for
  $n=175$ and $\alpha = 4,30$.}
\end{figure}  
\begin{theorem} 
{\bf For exact uniformly distributed $\alpha$-random 3-CNF-SAT problems,
 the expected number of variables in $V(x_i)$, for $i > 2$ and
 large $n$, is given by :}
\begin{eqnarray}
 E[\# V(x_i)] = 2 \; m_\alpha(i) = \frac{(i-1)
   (i-2)}{(n-1) (n-2)} \; \; 6 \; \alpha \label{m_i}
\end{eqnarray}
\end{theorem}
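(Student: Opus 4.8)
The quantity $\#V(x_i)$ counts the distinct variables $x_r$ with $r<i$ that appear together with $x_i$ in some clause where $x_i$ has the highest index. By the previous theorem, the expected number of such clauses is $m_\alpha(i)$, and each such clause contributes exactly two variables of smaller index. So a first, crude bound is $\#V(x_i)\le 2\,m_\alpha(i)$, with equality precisely when no variable of smaller index is shared between two of these clauses. The plan is therefore to argue that, for large $n$, the probability of such a collision is negligible, so that $E[\#V(x_i)] = 2\,m_\alpha(i) + o(1)$, which is the stated asymptotic identity.

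First I would fix $i$ and condition on the event that exactly $k$ clauses have $x_i$ as highest-indexed variable; by linearity and the tower property it suffices to show $E[\#V(x_i)\mid k\text{ such clauses}] = 2k - (\text{small correction})$. Each of these $k$ clauses picks an unordered pair from $\{x_1,\dots,x_{i-1}\}$ (together with a sign pattern, irrelevant here). Under the exact-uniform random model these pairs are, to leading order, uniform and nearly independent, so $\#V(x_i)$ is the number of distinct elements covered by $k$ random pairs out of $i-1$ items. The expected number of coincidences — pairs of clauses sharing at least one smaller-index variable — is $O(k^2/i)$, hence the expected deficit $2k - \#V(x_i)$ is $O(k^2/i)$. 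Taking expectations over $k$ (whose mean is $m_\alpha(i)=O(\alpha i^2/n^2)$ and which is concentrated), the correction term is $O(\alpha^2 i^3/n^4)=o(1)$ for the relevant range of $i$ and large $n$, giving $E[\#V(x_i)] = 2\,m_\alpha(i) = \frac{(i-1)(i-2)}{(n-1)(n-2)}\,6\,\alpha$.

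The main obstacle is making the independence/uniformity claim precise: the exact-uniform model constrains each variable to appear exactly $3\alpha$ times in total, so the clause-pair choices are sampled \emph{without} full replacement and are weakly negatively correlated. I would handle this either by a direct second-moment computation in the exact model (bounding $\mathrm{Cov}$ terms, which are $O(1/n)$), or by a standard Poissonization/contiguity argument comparing the exact model to the i.i.d.\ model where the pair-collision calculation is elementary. Either way the corrections are lower order, and since the theorem is stated only for ``$i>2$ and large $n$'' the asymptotic equality is exactly what the argument delivers; I would not attempt an exact finite-$n$ identity, as the collision terms genuinely contribute at finite $n$.
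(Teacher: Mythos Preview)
Your proposal is correct and follows essentially the same idea as the paper: argue that for large $n$ the $m_\alpha(i)$ clauses with $x_i$ as highest index are unlikely to share any lower-indexed variable, so that $\#V(x_i)\approx 2\,m_\alpha(i)$. The paper's execution is in fact more heuristic than yours --- it computes, for a fixed $x_j$ with $j<i$, the expected number of occurrences of $x_j$ among those clauses as $m_\alpha(i)\cdot\frac{2}{i-1}<1$ and simply infers from this that ``each variable is expected to appear at most once''; your explicit $O(k^2/i)$ collision bound and your attention to the dependence induced by the exact-uniform constraint make the same point more carefully.
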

\mbox{}\\
\begin{proof}
There is $C_2^{i-1}$ possible triplets with
$x_i$ being the highest indexed variable.  The probability for some
$x_j \; (j < i)$ to appear in one of these triplets is
$\frac{i-2}{C_2^{i-1}} = \frac{2}{i-1} = p $ for any $j$.  The occurrence of
$x_j$ follows a binomial model $Bi(m_\alpha(i),p)$, as one can choose several
times the same triplet (given different clauses with respect to the negative or
positive sign of the included literals). The expected number of occurrence of $x_j$ in the $m_\alpha(i)$
triplets is then $m_\alpha(i) \cdot p = \frac{6 \; \alpha (i-2)}{(n-1) (n-2)} \; <
\; 1$ for large $n$.  So each variable is expected to appear at most
once in the $m_\alpha(i)$ triplets-clauses.  Therefore, the number of
variables, different from $x_i$,
occurring in these $m_\alpha(i)$ clauses is $2 m_\alpha(i)$ as there are two
variables distinct from $x_i$ in each clause.
\end{proof}
\mbox{}\\
\begin{theorem} 
{\bf For exact uniformly distributed $\alpha$-random 3-CNF-SAT problems, the maximal expected 
complexity for the computation of $h_t(\cdot)$ is bounded by :}
\begin{eqnarray*}
\begin{array}{r}
\mbox{}\\[15pt]
\len(h_t) \leq \displaystyle \max_{k \geq 0} 
\end{array} \,
\begin{array}{l}
\mbox{} \;\; \left[
\begin{array}{r} 
\min \{ 2(\displaystyle \sum_{j=0}^k m_\alpha(n^{(j)})) - (k-1)
\; , \; n^{(k)} \} 
\end{array} \right] \\[3pt]
2\\[-10pt]
\mbox{}
\end{array} 
\end{eqnarray*}
{\bf where $ 2(\displaystyle \sum_{j=0}^k m_\alpha(n^{(j)})) - (k-1) \;$
is a concave quadratic function with respect to $t$ or $k$, as shown
on figures 3 and 4.} 
\end{theorem}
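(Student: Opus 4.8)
The plan is to read this bound off the general worst-case estimate~(\ref{len_h_g}) by specialising it to an exact uniformly distributed $\alpha$-random problem, feeding in the expected structural counts supplied by~(\ref{m_i}), and then to verify the advertised shape of the exponent by a single forward-difference computation.

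First I would fix an index $t$ with $2<t\le n$. The descriptor $h_t(\cdot)$ delivered by the algorithm is, once all the constraints inherited through the recursive calls~(\ref{recursive}) issued from the levels $i\ge t$ have been merged into it, a function of the form $h_t\wedge g^*_t$, so~(\ref{len_h_g}) applies:
\begin{eqnarray*}
\len(h_t) \le \min\!\Big(\,2^{\,\#[\,\bigcup_{i=t}^{n}V(x_i)\,]\,+\,1\,-\,(n-t)}\,,\;2^{\,t}\,\Big).
\end{eqnarray*}
Using $\#[\bigcup_{i=t}^{n}V(x_i)]\le\sum_{i=t}^{n}\#V(x_i)$ and $E[\#V(x_i)]=2\,m_\alpha(i)$ from~(\ref{m_i}), and evaluating the complexity at the expected values of these counts, one gets
\begin{eqnarray*}
\len(h_t) \le \min\!\Big(\,2^{\,2\sum_{i=t}^{n}m_\alpha(i)\,+\,1\,-\,(n-t)}\,,\;2^{\,t}\,\Big).
\end{eqnarray*}
Then I would reparametrise: set $k:=n-t$ and $n^{(j)}:=n-j$ for $0\le j\le k$, so that $n^{(0)}=n$, $n^{(k)}=t$, $\sum_{i=t}^{n}m_\alpha(i)=\sum_{j=0}^{k}m_\alpha(n^{(j)})$ and $n-t=k$; the first exponent becomes $2\big(\sum_{j=0}^{k}m_\alpha(n^{(j)})\big)+1-k=2\big(\sum_{j=0}^{k}m_\alpha(n^{(j)})\big)-(k-1)$ and the second term is $2^{n^{(k)}}$, whence
\begin{eqnarray*}
\len(h_t) \le 2^{\,\min\big\{\,2\big(\sum_{j=0}^{k}m_\alpha(n^{(j)})\big)-(k-1)\,,\;n^{(k)}\,\big\}}.
\end{eqnarray*}
This holds for every admissible $t$, i.e.\ for every $k\in\{0,\dots,n-3\}$; since the right-hand side depends only on $k$, taking the maximum over $k$ — attained at the hardest value of $t$ — yields exactly the stated bound.

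It remains to justify the shape of the exponent. I would set $g(k):=2\big(\sum_{j=0}^{k}m_\alpha(n^{(j)})\big)-(k-1)=2\sum_{i=n-k}^{n}m_\alpha(i)-(k-1)$, whose forward difference is $g(k+1)-g(k)=2\,m_\alpha(n-k-1)-1$. This is strictly decreasing in $k$ because $m_\alpha(i)=\frac{(i-1)(i-2)}{(n-1)(n-2)}\,3\,\alpha$ increases with $i$, so $g$ is (discretely) concave — rising while $2\,m_\alpha(n-k-1)>1$ and falling afterwards. Viewed as a function of $t=n-k$, $g$ is a sum of the quadratics $m_\alpha(i)$ minus an affine term, whose graph over the relevant window is the downward ``concave quadratic'' profile of Figures 3 and 4; the maximiser of $\min\{g(k),n^{(k)}\}$ is the crossing of that curve with the line $k\mapsto n-k$, which fixes the constant in the ${\cal O}(\cdot)$.

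The delicate step will be the passage to expected values. Replacing $\#[\bigcup_{i=t}^{n}V(x_i)]$ by $\sum_{i=t}^{n}\#V(x_i)$ is only a union bound: the sets $V(x_i)$ overlap heavily, so the genuine expected size of the union is appreciably smaller, and a sharper occupancy/inclusion--exclusion estimate in the random-clause model would be needed to pin down the optimal constant. A second, more conceptual point is that we bound the \emph{expected} complexity by $2$ raised to the \emph{expected} exponent, whereas $E[2^{X}]\ge 2^{E[X]}$ by convexity; to make the claim literally rigorous one should add a concentration argument — $\#[\bigcup_{i=t}^{n}V(x_i)]$ changes by at most $2$ when a single clause is resampled, so a bounded-differences (McDiarmid) inequality controls its deviation from the mean — and absorb the resulting fluctuation into the ${\cal O}(\cdot)$. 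Everything else, i.e.\ the reduction to~(\ref{len_h_g}) and the difference computation, is routine.
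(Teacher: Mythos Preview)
Your argument diverges from the paper's at the definition of $n^{(j)}$, and this is not a cosmetic difference but the heart of the bound. You set $n^{(j)}:=n-j$, i.e.\ you walk down the index set one step at a time and invoke~(\ref{len_h_g}) with the full union $\bigcup_{i=t}^{n}V(x_i)$. In the paper the $n^{(j)}$ are the indices actually visited by the recursive calls~(\ref{recursive}): $n^{(1)}$ is the \emph{largest} index appearing in $V(x_n)$, and in expectation
\[
n^{(k)}=\frac{2\,m_\alpha(n^{(k-1)})}{2\,m_\alpha(n^{(k-1)})+1}\,(n^{(k-1)}-1),
\]
so the trajectory jumps: for $n=10^5$ and $\alpha=4$ it has at most about $30$ steps, not $n$ steps. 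The exponent in the theorem is therefore a sum over $k+1$ terms with $k$ bounded independently of $n$, which is exactly what makes the bound $n$-free and yields the constants $\approx 294,\,490,\,1160$ quoted afterwards. With your reparametrisation the exponent at the crossing $2\sum_{j=0}^{k}m_\alpha(n-j)-(k-1)=n-k$ is of order $n$ (solve for $k$ using $m_\alpha(n-j)\approx 3\alpha$ for small $j$: the crossing sits near $k\approx n/(6\alpha)$, value $\approx n(6\alpha-1)/(6\alpha)$), so you have proved \emph{a} bound but not the one stated, and not one that supports the paper's conclusion.

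A second, smaller point: your forward-difference calculation shows concavity of $g(k)=2\sum_{i=n-k}^{n}m_\alpha(i)-(k-1)$, but since $m_\alpha$ is quadratic this $g$ is cubic in $k$, not quadratic; the ``concave quadratic'' claim in the paper refers to the trajectory exponent $k\mapsto 2\sum_{j=0}^{k}m_\alpha(n^{(j)})-(k-1)$ and is argued there (heuristically) via a CLT for the accumulated random counts, leading to a Gaussian shape for $2^{M_\alpha}$ and hence a quadratic log. Your difference argument, while cleaner, is about a different function. The self-criticism you give about the union bound and about $E[2^X]$ versus $2^{E[X]}$ is well taken, but it does not touch the main issue: you need to follow the recursive-call trajectory rather than sum over every index between $t$ and $n$.
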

\mbox{}\\
\begin{proof}
From (\ref{m_i}), we know that $\#V(x_i)$ is
expected to be maximal for $i=n$, when $\#V(x_n) = 
2 m_\alpha(n) = 6 \; \alpha$.  So, from (\ref{merging_complexity}), we
have that :
\begin{eqnarray}
\len(h_n) & \leq & 2^{\# V(x_n) 
  \;+\;1} \nonumber \\
& \leq & 2^{2 \; m_\alpha(n) + 1} = 2^{(6 \alpha + 1)}
\end{eqnarray}
For the computation of the recursive call $g^*_{j}(\alpha_1,
\cdots,\alpha_j)$ (see \ref{recursive}), the index $j$ is the highest
index of the variables in $V(x_n)$. Let us note it
$n^{(1)}$. We have thus $\# \; V(x_n) = 2 \; m_\alpha(n)$ indexes
uniformly chosen from $\{1, \cdots, n-1\}$. $n^{(1)}$ will be the expected
maximal index from an uniform distribution for $2 \; m_\alpha(n)$
$iid$ variables $u_i \sim  U[1,\cdots,n-1]$ :
\begin{eqnarray}
n^{(1)} & = & E[\;\; \max_{1\leq i \leq 2m_\alpha(n)} \;\; (u_i) ]=
\frac{2\; m_\alpha(n)}{2\; m_\alpha(n) +1} \; (n-1) = \frac{6 \alpha}{6 \alpha +
  1} \; (n-1)  
\end{eqnarray}
So, for the recursive call, we will have to compute $h_{n^{(1)}}(\cdot) \wedge
g^*_{n^{(1)}}(\cdot)$.  We get :
\begin{eqnarray*}
 \# V(x_{n^{(1)}}) & = & 2 \; m_\alpha(n^{(1)}) =
\frac{(n^{(1)}-1)(n^{(1)}-2)}{(n-1)(n-2)} \;6 \; \alpha \;\;\;\;\;\; \mbox{from (\ref{m_i})}\\
\mbox{}\\
\len(h_{n^{(1)}}(\cdot) \wedge
g^*_{n^{(1)}}(\cdot)) & \leq & 2^{\# \{V(x_n) 
 \; \cup \; V(x_{n^{(1)}})\} \;+\; 1-(2-1)} \;\;\;\;\; \mbox{from (\ref{len_h_g})} \\
& \leq & 2^{2(m_\alpha(n) +  m_\alpha(n^{(1)})) } 
\end{eqnarray*}
And so on, for the next recursive calls.  We get for the recursive
call $k \; (k>1)$ :
\begin{eqnarray}
n &\equiv& n^{(0)} \nonumber \\[12pt]
u_i &\sim& U[1, \cdots, n^{(k-1)}-1] \nonumber \\[12pt]
n^{(k)} = E[\;\; \max_{1\leq i \leq 2m_\alpha(n^{(k-1)})}
\;\; (u_i) ] &=&
\frac{2\; m_\alpha(n^{(k-1)})}{2\; m_\alpha(n^{(k-1)}) +1} \;
(n^{(k-1)}-1) \nonumber \\[12pt]
 \#\{V(x_{n^{(k)}}) \} &=& 2 \; m_\alpha(n^{(k)}) = 
\frac{(n^{(k)}-1)(n^{(k)}-2)}{(n-1)(n-2)} \;6 \; \alpha  \nonumber \\[12pt]
\len(h_{n^{(k)}}(\cdot) \wedge
g^*_{n^{(k)}}(\cdot)) & \leq & 2^{\left[
\begin{array}{r}
\min \{ \# \displaystyle \bigcup_{j=0}^k \; \{V(x_{n^{(j)}})\}
 - (k-1) \; , \; n^{(k)} \} 
\end{array} \right] } \nonumber \\[12pt]
\len(h_{n^{(k)}}(\cdot) \wedge
g^*_{n^{(k)}}(\cdot)) & \leq & 2^{\left[
\begin{array}{r} 
\min \{ 2(\displaystyle \sum_{j=0}^k m_\alpha(n^{(j)})) - (k-1)
\; , \; n^{(k)} \}
\end{array} \right] } \nonumber \\[12pt]
&\leq&2^{\left[
\begin{array}{r} 
\min \{ M_\alpha(n^{(k)}), n^{(k)} \} 
\end{array} \right] } \label{complexity_formula} 
\end{eqnarray}
\begin{figure}[htb]
\centering
\includegraphics[height=6cm]{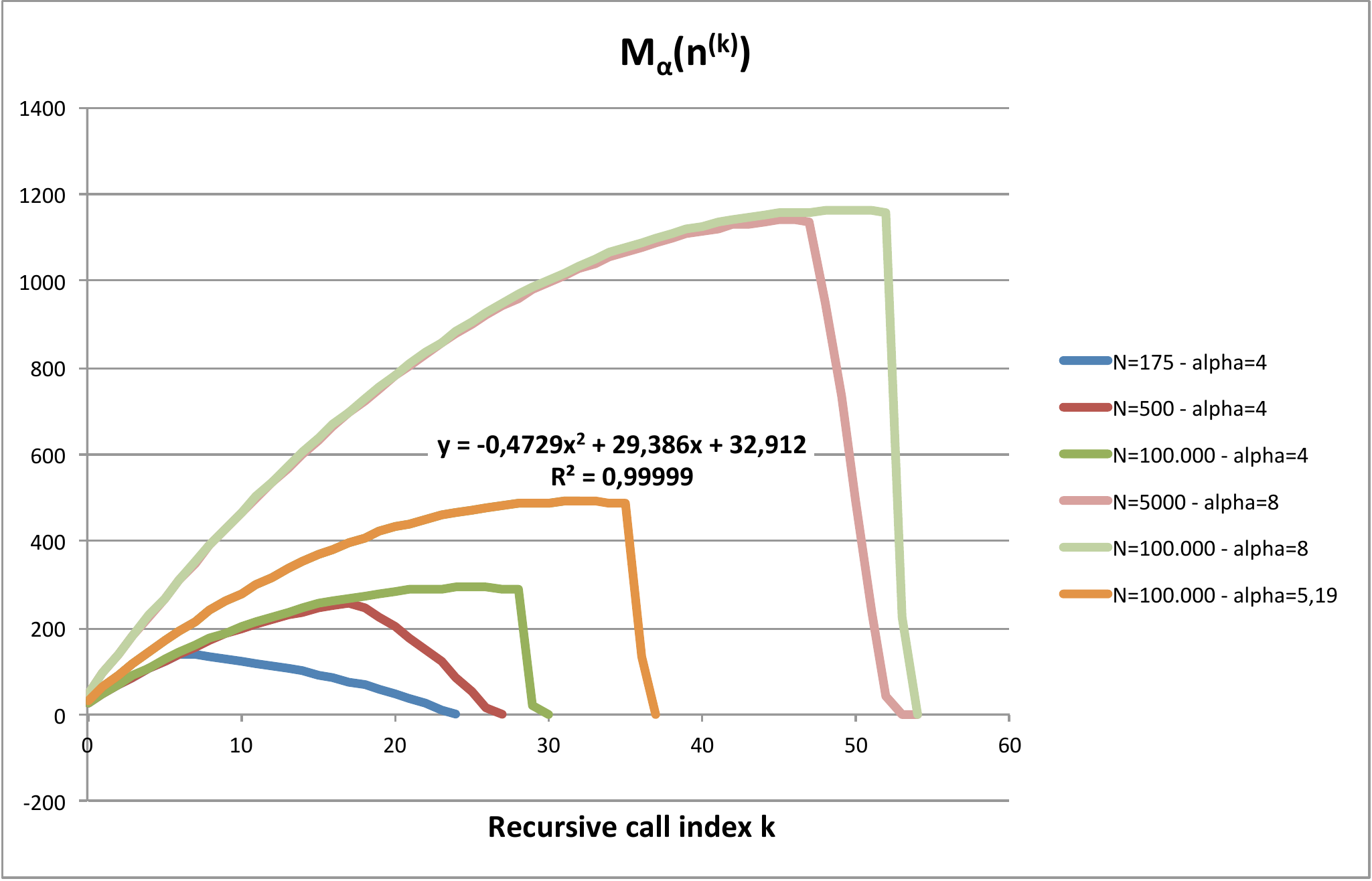} \\
{\small Fig. 3 : Complexity wrt $k$ : $M_\alpha(n^{(k)})
 = 2(\displaystyle \sum_{j=0}^k m_\alpha(n^{(j)})) -
  (k-1) $ 
where $n^{(0)}=n$.}
\end{figure}  
 
\noindent This bound is only defined for the variables with
$n^{(k)}$ as index. Note that $n^{(k)}$ are
functions of the starting index $n^{(0)}=n$.  We can compute similar bounds for other starting
indexes $n^{(0)}$ in $[1, \cdots , n-1]$, so that $M_\alpha(\cdot)$
can be defined for all $t$ as shown in Figure 4.  \\[12pt]
\noindent Numerical computations show that, for large $n$,  $M_\alpha(n^{(k)})$ as well
as $M_\alpha(t)$ are concave
quadratic functions with coefficients only depending on $\alpha$. This
can be easily explained as a mere consequence of the $iid$ randomness of the 
variables  $\#\{V(x_{n^{(k)}}) \}$ and $m_\alpha(n^{(j)})$.  Indeed, the {\it central limit theorem} for
the expectation of 
$iid$ random variables predicts that 
E[$2^{M_\alpha(n^{(k)})}$]  follows a Normal
distribution (censored by $\min$).  But $X \sim N(\mu,\sigma^2)$
implies a quadratic log-density : $\log(f_{X}(x)) \propto
-\frac{(x-\mu)^2}{2 \sigma}$. For each value of $\alpha$, we can
compute the corresponding $\mu_\alpha$, $\sigma_\alpha$ and the
maximum value for $M_\alpha(n^{(k)})$.  Quadratic regression
estimations give $\max_k(M_\alpha(n^{(k)})) 
\approx 294$ for $\alpha=4$, $\max_k(M_\alpha(n^{(k)})) \approx
490$ for $\alpha=5,12$ (see figure 3 and below for the choice of such $\alpha$)  and $\max_t(M_\alpha(t)) \approx
1160$ for $\alpha=8$.
\\[12pt] 
\noindent {\bf Remark : } It is now important to see whether different
starting points $n^{(0)}$
yield not to
aggregating trajectories so that addition of bounds are to be
considered.  This situation can be neglected as shown in the following
theorem.
\end{proof}
\mbox{}\\
\begin{theorem}
{\bf The probability for a given variable $x_i$
to be in more than one trajectory tends to zero for large $n$.}\\
\end{theorem}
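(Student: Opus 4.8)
\noindent \emph{Proof idea.}
The plan is to reduce the statement to a first--moment estimate, using only the distributional facts already proved. Recall from (\ref{recursive}) and the proof of the preceding theorem that, in an exact uniformly distributed $\alpha$--random instance, the recursive calls organise themselves into \emph{trajectories}: a trajectory $\mathcal{T}_s$ is attached to a starting index $s$ and is the strictly decreasing chain $s = n^{(0)} > n^{(1)} > n^{(2)} > \cdots$ in which $n^{(k)}$ is the highest index occurring in $V(x_{n^{(k-1)}})$. Say that $x_i$ \emph{lies on} $\mathcal{T}_s$ if $i = n^{(k)}_s$ for some $k \geq 1$ -- equivalently, if the recursion rooted at $s$ re-visits index $i$, which is exactly the situation in which the descriptor $h_i$ would receive an additional constraint to be merged. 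Because the descent is deterministic below any node, $x_i$ can lie on trajectories rooted at two distinct indices only if either (a) there are two distinct indices $j$ with $i < j$ and $i = \max V(x_j)$, or (b) there is one such $j$ which is itself re-visited by the recursion. With $P(i) := \{\, j > i : \max V(x_j) = i \,\}$ this gives
\begin{eqnarray*}
\Pr[\,x_i\ \text{lies on}\ \geq 2\ \text{trajectories}\,]&\leq&\Pr[\,|P(i)|\geq 2\,]+\sum_{j>i}\Pr[\,j\in P(i)\ \text{and}\ P(j)\neq\emptyset\,]+\cdots
\end{eqnarray*}
the remaining terms accounting for longer re-visit chains.

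\noindent The core is the estimate $E\,|P(i)| = \sum_{j>i}\Pr[\max V(x_j)=i]$. The key point is that $\max V(x_j)=i$ forces the existence of a clause built on $\{x_i,x_\ell,x_j\}$ with $\ell<i<j$: indeed $i\in V(x_j)$ means some clause with $x_j$ largest contains $x_i$, and if $i$ is in addition the maximum of $V(x_j)$ then the third variable of that clause must lie below $i$. Since each variable occurs in $3\alpha$ literal slots and the two partners in a clause are essentially uniform, the expected number of such clauses is $\asymp \frac{6\alpha}{n}\cdot\frac{i}{n}=O\!\big(i/n^2\big)$, uniformly in $j$; hence $\Pr[\max V(x_j)=i]=O(i/n^2)$ and $E\,|P(i)|=O(i/n)$, which tends to $0$ for fixed $i$. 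Feeding this into the displayed bound, the first term is $\leq\tfrac12(E|P(i)|)^2=O\!\big((i/n)^2\big)$, and the re-visit--chain terms form a rapidly decreasing series whose $k$-th term is $O\!\big((i/n)\cdot(\,\mathrm{const}/n)^{\,k-1}\big)$ by the same estimate applied successively to $P(j),P(j_1),\dots$; so the total probability tends to $0$ with $n$. Therefore $\Pr[\,x_i\ \text{lies on}\ \geq 2\ \text{trajectories}\,]\to 0$.

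\noindent The main obstacle is technical rather than conceptual: one must make the ``essentially uniform, essentially distinct partners'' description of the clauses containing a given variable rigorous and uniform in $n$ -- precisely the kind of estimate established in the proof of (\ref{m_i}), where it is shown that for large $n$ a variable occurs at most once among the $m_\alpha(j)$ clauses having $x_j$ as largest index. One then has to handle the second factor -- the requirement that the \emph{other} clauses with $x_j$ as largest index contribute only variables $\leq i$ -- which only improves the bound (it is at most $1$) but must be controlled cleanly, especially in the transitional range $j\asymp n/\sqrt{6\alpha}$ where the number of such clauses is of order $1$. A secondary point to state explicitly is that the events $\{\max V(x_j)=i\}$ for different $j$, and the sets $P(j)$ appearing in the re-visit chains, are not independent; since only union and Markov bounds are used, this is harmless. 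With these in hand the statement is simply the estimate $E\,|P(i)|=O(i/n)$ combined with Markov's inequality, so the ``central limit'' heuristic invoked elsewhere in the paper is not needed here.
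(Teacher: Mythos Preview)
Your route is genuinely different from the paper's. The paper never isolates the immediate-predecessor set $P(i)=\{j>i:\max V(x_j)=i\}$. Instead it fixes a trajectory $tr(m,k_m)$ (starting index $m$, step $k_m$ with $n^{(k_m)}>i$), factors $\Pr[\,i=\max\{l:x_l\in tr(m,k_m)\}\,]$ into $\Pr[x_i\in tr(m,k_m)]$ times a conditional maximum probability, bounds each piece separately (the first by $\frac{6\alpha}{(n-1)(n-2)}\sum_{j\le k_m}(n^{(j)}-2)$, the second by $(i/(m-1))^{2\sum m_\alpha(n^{(j)})-1}$), and then models ``$i$ is the next index reached from $tr(m,k_m)$'' as a Bernoulli trial over the at most $\binom{n-i}{2}$ admissible pairs $(m,k_m)$, concluding via the binomial expression for $\Pr[\text{two or more successes}]$. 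It also appends a separate argument that the trajectory length $k$ is not of order $n$. Your local formulation is tidier and sidesteps the trajectory-length issue; the estimate $\Pr[\max V(x_j)=i]=O(i/n^2)$, obtained by forcing a clause on $\{x_i,x_\ell,x_j\}$ with $\ell<i$, is a sharper single-step bound than anything the paper writes down.

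There is, however, a real gap in the probabilistic step. You assert $\Pr[\,|P(i)|\ge 2\,]\le\tfrac12(E|P(i)|)^2$, but this is a second-factorial-moment bound: the union bound over pairs gives $\sum_{j<j'}\Pr[j\in P(i),\,j'\in P(i)]$, and replacing each joint probability by $\Pr[j\in P(i)]\Pr[j'\in P(i)]$ is exactly the (approximate) pairwise independence you later say you do not need. Markov alone yields only $\Pr[\,|P(i)|\ge 2\,]\le\tfrac12\,E|P(i)|=O(i/n)$, and this does \emph{not} tend to $0$ when $i$ is of order $n$ --- precisely the range of indices near the maximiser of $M_\alpha(t)$ where the theorem is invoked. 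Your parenthetical ``tends to $0$ for fixed $i$'' therefore concedes the case that matters. To close the gap you must either verify the near-independence of the events $\{j\in P(i)\}_{j>i}$ in the exact $\alpha$-random model (plausible, since the clause sets with $x_j$ largest and with $x_{j'}$ largest are disjoint, but it has to be argued) or bound $E\binom{|P(i)|}{2}$ directly; either way the argument is no longer the pure first-moment-plus-Markov estimate you describe.
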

\begin{proof}
\noindent Let us consider separately the possible trajectories
$tr(x_{n^{(0)}} \rightarrow x_{n^{(k)}})$ for $n^{(0)}=m \in \{1,
\cdots,n\}$ and $k \in \{1,\cdots,n\}$.  Let us note a given
trajectory : $tr(m,k_m)$ with
$k_m$ such that $n^{(k_m)} > i$. For each
variable $x_i$, there exists at most $(n-i)(n-i-1)/2$ trajectories $tr(m,k_m)$ where $x_i$ could be
the next highest indexed variable for $n^{(k_m+1)}$ :
$tr(n,0), \cdots, tr(n,k_n), \cdots,$\linebreak $tr(i+1,0)$.  The
probability for $x_i$ to get the highest index in a trajectory is :
\begin{eqnarray*}
P[i = \max_l \{l : x_l \in tr(m,k_m)\}] \hspace{-30pt} && \\
&=& P[i = \max_l \{l : x_l \in \hspace{-6pt} \bigcup_{j=0|n^{(0)}=m}^{j=k_m} \hspace{-6pt}
V(x_{n^{(j)}}) \setminus \{x_{n^{(0)}},\cdots,x_{n^{(k_m)}} \} \} ] \\
&=& P[i = \max_l \{l : x_l \in tr(m,k_m) \} | x_i
\in tr(m,k_m)]  \; \cdot \; P[x_i \in tr(m,k_m)] 
\end{eqnarray*}
\noindent We have :
\begin{eqnarray*}
P[x_i \in tr(m,k_m)] &=& \# \{\mbox{clauses in $tr(m,k_m)$}\} \cdot P[x_i
\in \mbox{the clause and } i \mbox{ is the maximum index}] \\
&=& \sum_{j=0|n^{(0)}=m}^{j=k_m} \; m_\alpha(n^{(j)}) \cdot
\frac{n^{(j)}-2}{C_2^{n^{(j)}-1}}   \mbox{\hspace{2cm}
  [see (\ref{m_alpha})]}\\
&=& \sum_{j=0|n^{(0)}=m}^{j=k_m} \; m_\alpha(n^{(j)}) \cdot
\frac{2}{n^{(j)}-1} \\
&=& \sum_{j=0|n^{(0)}=m}^{j=k_m} \; \frac{(n^{(j)}-1)(n^{(j)}-2)}{(n-1)(n-2)(n^{(j)}-1)} \;\; 6 \alpha  \\
&=& \frac{(6 \alpha)}{(n-1)(n-2)} \;\; \sum_{j=0}^{j=k_m} \; (n^{(j)}-2)
\end{eqnarray*}
\noindent For instance : 
\[P[x_i \in tr(m,0)]=\frac{6 \alpha
  (n^{(0)}-2)}{(n-1)(n-2)} = \frac{6 \alpha}{n-1} \frac{m-2}{n-2} \leq \frac{6 \alpha}{n-1}
\mbox{ as } n^{(0)}=m \leq n\]  
and 
\begin{eqnarray*}
P[x_i \in tr(m,1)] &\leq& P[x_i \in tr(n,1)] \\
 &=& \frac{6 \alpha
  }{(n-1)(n-2)} (n^{(0)}-2)+(n^{(1)}-2) \\
 &=& \frac{6 \alpha}{n-1} + \frac{6
    \alpha}{(n-1)(n-2)} ([\frac{6
    \alpha}{6 \alpha+1} (n-1)]-2) \\
&<&  \frac{6 \alpha}{n-1} + \frac{6
    \alpha}{(n-1)(n-2)} (n-3) \\
&<&  \frac{6 \alpha}{n-1} (1+ \frac{n-3}{n-2}) 
\end{eqnarray*}
Finally, 
\begin{eqnarray*}
P[x_i \in tr(m,k_m)]  \leq P[x_i \in tr(n,k_m)] 
&<&  \frac{6 \alpha}{n-1} (1+ \frac{n-3}{n-2} + \cdots + \frac{n-(k_m+2)}{n-2}) \\
&\rightarrow& 0 \;\; \mbox{for large $n$ with respect to $k_m$ and $\alpha$}. 
\end{eqnarray*}
Now, considering that the elements of $tr(m,k_m)$ are $iid$
uniformly distributed random variables drawn from $\{1, \cdots,
m-1\}$, we have : 
\begin{eqnarray*}
P[i = \max_l \{l : x_l \in tr(m,k_m)\} | x_i
\in tr(m,k_m)] \hspace{-40pt} && \\
 &=& P[i=\max \{ \# tr(m,k_m) \mbox{uniform random
  variables}\}] \\
&& \mbox{[for large $n$, we use the
  expected value for $\# tr(m,k_m)$]} \\
&=& P[ \sum_{j=0}^{j=k_m} 2 m_\alpha(n^{(j)}) - 1 \; \; \mbox{ uniform $iid$ variables } \leq i
\;\; ] \\
&=& \prod_{l=1}^{\sum_{j=0}^{j=k_m} 2 m_\alpha(n^{(j)}) - 1} \frac{i}{(m-1)}  \\
&=& \left( \frac{i}{m-1} \right)^{\sum_{j=0}^{j=k_m} 2 m_\alpha(n^{(j)}) - 1} 
\end{eqnarray*}
\noindent In conclusion, 
\begin{eqnarray*}
P[i = \max_l \{l : x_l \in tr(m,k_m) ] 
&=& \left( \frac{i}{m-1} \right)^{\sum_{j=0}^{j=k_m} 2 m_\alpha(n^{(j)}) - 1} \; 
\frac{(6 \alpha)}{(n-1)(n-2)} \;\; \sum_{j=0}^{j=k_m} \; (n^{(j)}-2) \\
&\leq & \frac{(6 \alpha)}{(n-1)(n-2)} \;\; \sum_{j=0}^{j=k_m} \; (n^{(j)}-2) \mbox{\hspace{1cm} as 
$i \leq (m-1)$} \\
&\rightarrow& 0 \mbox{\hspace{1cm} for large $n$ with respect to $k$ and $\alpha$} 
\end{eqnarray*}
\noindent Therefore, there is a negligible  probability for a
variable $x_i$ to be maximal in two or more trajectories $tr(m,k_m)$, as we can see
this event as the output of a binomial model with a very small
probability of success (``$x_i$ being maximal in some $tr(m,k_m)$''),
over $(n-i)(n-i-1)/2$ possible trajectories :
\begin{eqnarray*}
\mbox{Let } p = \max_{m,k_m} P[i = \max_l \{l : x_l \in tr(m,k_m)\} ]
\end{eqnarray*}
\begin{eqnarray*}
\mbox{Then, } P[\mbox{Two or more successes}]  \hspace{-45pt} && \\
&=& 1 - (P[\mbox{0 success}] + P[\mbox{1
  success}]) \\
&\leq& 1 - ([(1-p)^{\frac{(n-i)(n-i-1)}{2}}] + [\frac{(n-i)(n-i-1)}{2} p(1-p)^{\frac{(n-i)(n-i-1)}{2}-1}]) \\
&\rightarrow& 0 \mbox{\hspace{1cm} for large $n$,  as $p \rightarrow 0$
  for large $n$ with respect to $k$ and $\alpha$.} 
\end{eqnarray*}
The last thing to prove is that $k$ is not ${\cal O}(n)$ as $\alpha$
is a given constant.  Figure 3, which is computed with the theoretical
formula from (\ref{complexity_formula}), shows that the maximal value for $k$ is
negligible with respect to $n$ : $k \leq 30$ for $n=100.000$
and $\alpha = 4$, and $k \leq 55$ when $n=100.000$ and $\alpha=8$.
\end{proof}
\begin{figure}[htb]
\centering
\includegraphics[height=6cm]{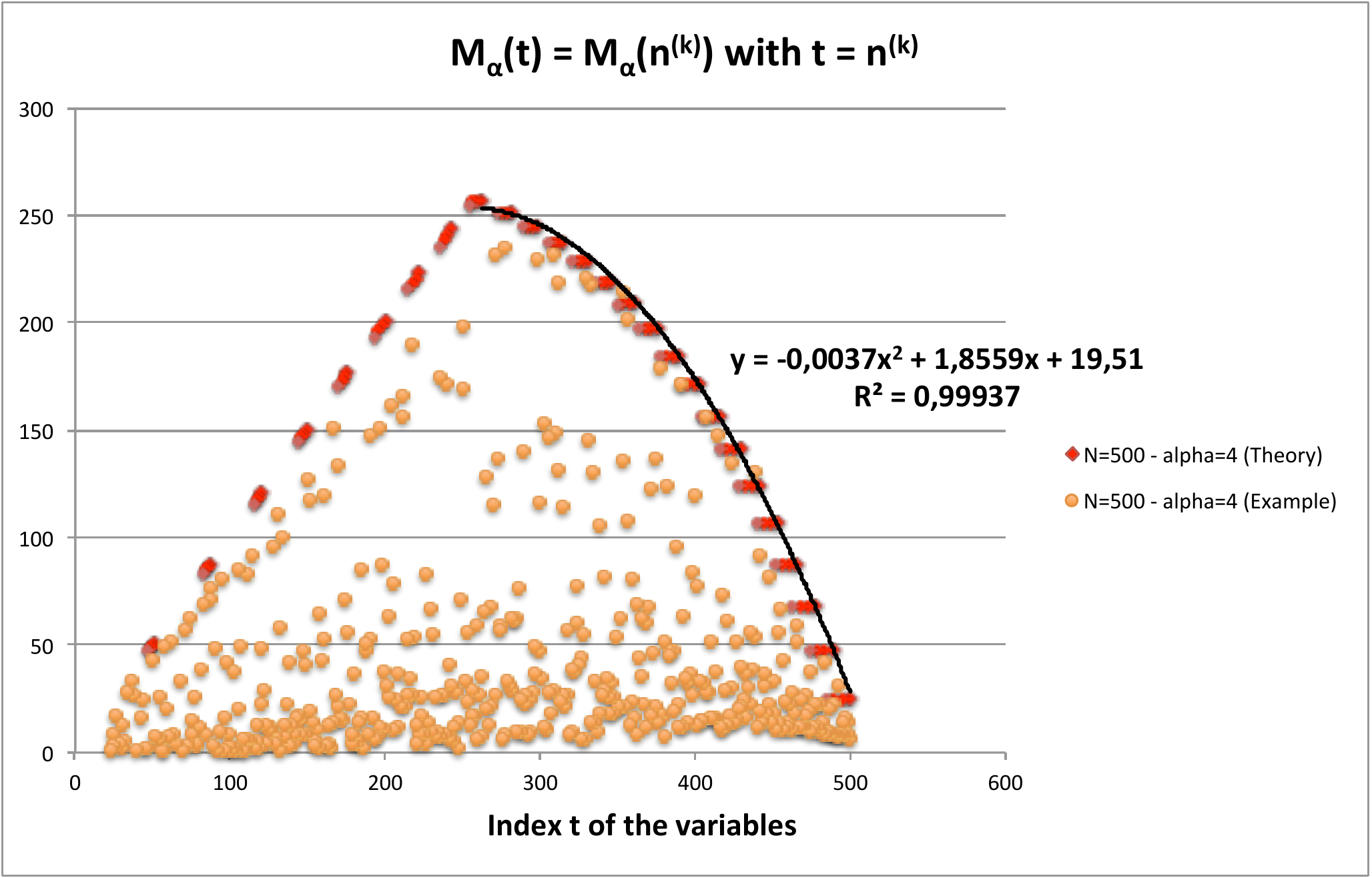}  \\
{\small Fig. 4 : Complexity wrt variable index $t$ : $M_\alpha(n^{(k)})=M_\alpha(t)$ for
  $n^{(0)} = n$. \\
The example comes from the Dimacs generator at https://toughsat.appspot.com/ }
\end{figure}  
\noindent {\bf Remark :} With some real generated 3-CNF-SAT problems, it is possible to observe
a ``cluster'' process, the size of one trajectory, i.e. the number of
$x_i$ involved in that trajectory, becoming more and
more important so that this trajectory attracts all the variables.
Then, $k$ is ${\cal}O(n)$, $P[x_i \in tr(n,k)] \rightarrow 1$ and the
complexity becomes exponential.  It is easy to solve these cases.  As
the variables are uniformly distributed in random 3-CNF-SAT problems,
each variable $x_i$ being
 repeated approximatively $3 \alpha$
times, it is possible to
permute joining variables $x_j$ (belonging to two or more
trajectories) with a smaller indexed variable,
such as $x_{j-1}$ (or $x_{j-2}$ if $x_{j-1}$ is already in a previous
trajectory, and so on).  The two
trajectories will then be dissociated. 
\noindent We propose the following {\it ``permutation'' algorithm} :
{\small \textsf{ 
\begin{itemize}
\item[$\diamond$] First, apply the sorting algorithm over the 3-CNF-SAT
  problem;
\item[$\diamond$] Sort each clause $[\neg]x_r \vee [\neg]x_s \vee
  [\neg] x_t$ so that $r \ge s \ge t$;
\item[$\diamond$] Beginning with the last clause (with $[\neg]x_n$),
  mark $x_j$ where $j=\max \{i : x_i \in
  V(x_n) \}$ as already belonging in a trajectory and initialize
  $W(x_n):=V(x_n)$ and $W(x_j):=V(x_n)$ where
  $W(x_j)\equiv \cup_i V(x_i)$ for $i$ such that $x_i \in
  tr(\cdots,x_j)$;
\item[$\diamond$] Loop over $k:=1$ to $k:=n-3$ with clauses having
  $[\neg]x_{n-k} $ as the highest indexed variable;  if $x_{n-k}$ is
  already marked as belonging to a trajectory, do
  $W(x_{n-k}):=W(x_{n-k}) \cup V(x_{n-k})$ otherwise
  initialize $W(x_{n-k}):=V(x_{n-k})$;
\item[] Consider $x_j$ where $j=\max \{i : x_i \in
  W(x_{n-k}) \}$; do while ($x_j$ is already marked as belonging in a
  trajectory and $j \ge 3 \alpha$)  relabel $x_j \leftrightarrow
  x_{j-1}$ and $j:=j-1$ 
\item[] [we do not consider $j < 3 \alpha$ as merging
  of trajectories for small indexes is not a problem because $M_\alpha(j)=j$];
\item[] Initialize $W(x_j):=W(x_{n-k})$. 
\end{itemize}
}  }
\noindent Figure 4 shows the result for a
Dimacs generated 3-CNF-SAT problem with 500 variables and $\alpha =
4$.  We apply the sorting and the permuting algorithms on the
generated file to eliminate joining trajectories. \\[12pt]
\noindent If we have proved in this section that the complexity is bounded with
respect to $n$, we still have to show that complexity is not
increasing with respect to $\alpha$, which is not the case for $M_\alpha(t)$. \\[12pt]
\section{Complexity analysis with respect to $\alpha$}
\noindent It is easy to see that the complexity is
an increasing function of $\alpha$ for exact uniformly distributed $\alpha$-random 3-CNF-SAT
problems, at least for small $\alpha$, as smaller $\alpha$-random 3-CNF-SAT problems can be viewed
as subsets of larger $\alpha$-random problems.  \\[12pt]
\noindent But there should be somewhere {\bf \em a threshold for $\alpha$} as large
$\alpha$-random problems are easy to solve because unsatisfiability
is often a consequence of a subset of the problem.  Empirical
results from the literature suggest that this threshold for $\alpha$
is $\approx 4.258$.  See \cite{Crawford199631}. \\[12pt]
\noindent The analysis of complexity with respect to $\alpha$ will be
done through ${\cal S}_{\varphi}$, the set of all satisfying solutions for the
3-CNF-SAT problem $\varphi$.  See definition
(\ref{no2}). \\
\begin{theorem} {\bf For exact uniformly distributed $\alpha$-random
  3-CNF-SAT problems $\varphi=\{\psi_j\}_{1 \leq j \leq m}$ with
  $n$ variables and $m$ clauses, we get for large $n$ and $m$ the following
  expected number of solutions  :}
\begin{eqnarray}
E[ \# {\cal S}_{\varphi} ] = E[ \#\{(x_1, \cdots,x_n) \in \{0,1\}^n|\varphi(x_1,
\cdots,x_n) = 1\} ] =  7 \; (\frac{7}{4})^{n-3} \; (\frac{7}{8})^{(m-n+2)}
\end{eqnarray}
\end{theorem}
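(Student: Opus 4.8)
The plan is to compute the expectation by linearity, treating each of the $2^n$ assignments separately and computing the probability that a fixed assignment $(x_1,\dots,x_n)$ satisfies all $m$ clauses of a random exact uniformly distributed $\alpha$-random problem. Since $E[\#{\cal S}_\varphi] = \sum_{(x_1,\dots,x_n)\in\{0,1\}^n} P[\varphi(x_1,\dots,x_n)=1]$, and—by the symmetry built into the definition, where the $3\alpha n$ literals are placed at random—the probability that a given clause is satisfied by a fixed assignment should depend only on coarse data of the assignment (essentially how many of its variables are set to $1$ versus $0$), the main work is (a) identifying that per-clause survival probability and (b) summing the resulting binomial-type expression over all assignments. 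A single random clause on three distinct variables is falsified by a fixed assignment exactly when all three literals evaluate to $0$; for a ``generic'' clause whose three underlying variables and three signs are drawn uniformly, this happens with probability $1/8$, so the clause is satisfied with probability $7/8$. If the $m$ clauses were independent, each fixed assignment would survive with probability $(7/8)^m$, giving $E[\#{\cal S}_\varphi]=2^n (7/8)^m$; the stated formula $7\,(7/4)^{n-3}(7/8)^{m-n+2}$ is a refinement of this, and the first step is to see where the correction comes from.

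Rewriting the target, $7\,(7/4)^{n-3}(7/8)^{m-n+2} = 7 \cdot 7^{n-3}/4^{n-3} \cdot 7^{m-n+2}/8^{m-n+2} = 7^{m} / (2^{2(n-3)} \cdot 2^{3(m-n+2)}) = 7^m \cdot 2^{-2n+6-3m+3n-6} = 7^m \, 2^{\,n-3m} = 2^n\,(7/8)^m \cdot 2^{-2m+\,?}$—so I would first do this bookkeeping carefully to see exactly which exponent the author intends, since $2^n(7/8)^m = 2^n 7^m 2^{-3m} = 7^m 2^{n-3m}$, which matches $7^m 2^{n-3m}$ exactly. Thus the claimed formula is just the independent-clauses heuristic $2^n(7/8)^m$ rewritten, and the content of the theorem is the assertion that, for exact uniformly distributed $\alpha$-random problems and \emph{large} $n,m$, the dependencies between clauses (forced equal literal counts per variable, no repeated-variable clauses, etc.) wash out to leading order. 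So the real proof obligation is an asymptotic independence / second-moment-type argument: show that $P[\varphi(x)=1] = (7/8)^m(1+o(1))$ uniformly enough that $\sum_x P[\varphi(x)=1] = 2^n(7/8)^m(1+o(1))$, using that each variable appears only $O(\alpha)$ times so the clause events are ``almost'' independent, and handling clauses with a repeated variable (probability $O(1/n)$ each, contributing a negligible multiplicative correction) and the balanced-literal constraint (which by symmetry does not bias a uniformly random assignment).

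The steps, in order: (1) fix an assignment $x$ and, using the exchangeability of the literal placement, express $P[\varphi(x)=1]$ as an average over the clause-structure randomness of a product of per-clause indicator survival events; (2) compute the marginal per-clause survival probability as $7/8$ for clauses on three distinct variables, and bound the contribution of clauses with a coincidence of variables; (3) argue, via the bounded occurrence $3\alpha$ of each variable, that the covariance between the survival events of two clauses is $O(1/n)$ and that there are only $O(\alpha^2 n)$ pairs sharing a variable, so $\log P[\varphi(x)=1] = m\log(7/8) + o(1)$ uniformly in $x$; (4) sum over all $2^n$ assignments and simplify $2^n(7/8)^m = 7^m 2^{n-3m} = 7\,(7/4)^{n-3}(7/8)^{m-n+2}$ by the exponent arithmetic above. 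The main obstacle is step (3): making the ``large $n$ and $m$'' independence claim rigorous, i.e. controlling the joint distribution of which clauses contain which variables under the exact-balance constraint well enough that the product of $7/8$'s is not thrown off by a constant factor—this is exactly the kind of configuration-model concentration estimate that the paper elsewhere treats heuristically, and I would expect the author's proof to assert it by appeal to the uniform randomness rather than prove a quantitative error bound.
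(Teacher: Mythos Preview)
Your approach is correct and genuinely different from the paper's. You argue by linearity of expectation over assignments and per-clause survival probability $7/8$, correctly observing that the stated formula is nothing but $2^n(7/8)^m$ rewritten; indeed
\[
7\,(7/4)^{n-3}(7/8)^{m-n+2}=7^{m}\,2^{\,n-3m}=2^n(7/8)^m.
\]
The paper instead proceeds \emph{inductively clause by clause} through its matrix representation: it reorders the clauses so that the first $n-2$ each introduce exactly one new variable and the remaining $m-n+2$ introduce none, then argues that the expected number of rows in $[\bigwedge_{i\le j}\psi_i]$ is multiplied by $7/4$ at each ``new variable'' step (doubling for the new column, then losing an expected $1/4$ of the rows on the branch where the new literal is false) and by $7/8$ at each ``no new variable'' step. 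This yields the product $7\cdot(7/4)^{n-3}\cdot(7/8)^{m-n+2}$ directly, and explains why the formula is presented in that factored form rather than as $2^n(7/8)^m$.

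What each buys: the paper's decomposition is tied to its descriptor machinery and gives some intuition for how solutions disappear as clauses accumulate, but its probability computations (e.g.\ the $1\cdot\tfrac14+2\cdot\tfrac34=\tfrac74$ deletion count) implicitly assume uniformity of the partial solution matrix at every stage, which is asserted rather than proved. Your route is shorter and more standard; the independence issue you flag in step~(3) is real, but note that the exact-balance condition actually helps you more than you use: since each variable appears $\tfrac{3\alpha}{2}$ times with each sign, \emph{every} fixed assignment makes exactly half of the $3m$ literal occurrences false, so $P[\varphi(x)=1]$ is literally the same for all $x$ and no averaging over assignment weight is needed. The remaining work is then purely the matching/configuration-model estimate that a random partition of $3m$ tokens (half ``false'') into $m$ triples has no all-false triple with probability $(7/8)^m(1+o(1))$, which is exactly the heuristic step the paper also takes for granted.
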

\begin{proof}
\mbox{}\\
\noindent $\bullet$ Let us {\it re-order} the $m$ clauses $\psi_j$
 in such a way that each clause has only {\it one} new additional
variables with respect to the set of variables appearing in the
previous clauses. \\
\noindent $\bullet$ Let ${\cal V}_k=\{x_i \; | \; \exists j \;,\; 1 \leq j \leq k
\;:\; x_i \mbox{ appears in } \psi_j \}$.  The {\it re-ordering} of the clauses yields to embedded
  subsets ${\cal V}_1 \subseteq {\cal V}_2 \cdots \subseteq {\cal V}_{n-2}
  = \{x_1, \cdots, x_n\}$ with 
$\# {\cal V}_1 =  3, \cdots, \#{\cal V}_k = k+2, \cdots,$ \linebreak $ \#{\cal V}_{n-2} =
n$ and $\# {\cal V}_{k'} =  n \;\; \forall \; k' \ge n-2.$\\
The cases where all clause $\psi_{k+1}$ introduces {\it two or three} new
additional variables to ${\cal V}_{k}$ are to be neglected, as this means that the
3-CNF-SAT problem can be split into two sub-problems with one
or zero common variable, which reduces drastically the complexity of the
problem.  \\
\noindent $\bullet$ Let us look at the expected effect of a clause $\psi_j \; (1
\leq j \leq m)$ over
the number of solutions : 
\begin{enumerate}
\item \noindent Let us consider $\psi_1$.  \\
The first clause yields to
  $7 \cdot 2^{n-3}$ solutions.  The {\it matrix representation} of
  $\psi_1$ will be a $7 \times 3$ matrix.
\item \noindent Let us consider $\psi_2$. \\
Let $\psi_2$ introduces only one new additional
  variable $x_t$, and let $x_r$ and $x_s$ be the two common variables for
  $\psi_1$ and $\psi_2$ : 
\begin{eqnarray*}
[\psi_1]=  \left( \begin{array}{ccc}
x_q &  x_r &  x_s \\
\hline
\multicolumn{3}{c}{\mbox{7 lines}}
\end{array} \right) \mbox{  and  } 
[\psi_2]=  \left( \begin{array}{ccc}
 x_r &  x_s & x_t \\
\hline
\multicolumn{3}{c}{\mbox{7 lines}}
\end{array} \right) 
\end{eqnarray*}
Depending on the sign of the literal $x_t$ in $\psi_2$, the result
matrix for $[\psi_1 \wedge \psi_2]$ will get the $7$ lines of
$[\psi_1]$ with a {\it zero} in the column for $x_t$ if $\psi_2 = [\neg] x_r \vee
[\neg] x_s \vee \neg x_t$ or with a {\it one} if $\psi_2 = [\neg] x_r \vee
[\neg] x_s \vee  x_t$.  This corresponds to solutions where the
literal $[\neg] x_t$ is satisfied.  \\
On the contrary, when the value
in the column for $x_t$ is opposite to the sign of $[\neg] x_t$, the
satisfiability of $\psi_2$ should pass through the literals $x_r$
and $x_s$.  Among the four possible values for $(x_r,
  x_s)$, only three will be accepted.  One couple for $(x_r, x_s)$ will be
  ruled out, as well in matrix $[\psi_1]$ as in $[\psi_2]$.  This corresponds to
  one or two lines deleted in $[\psi_1]$, depending on the sign for $x_r$ and
  $x_s$ in $\psi_1$.  The expected number of lines deleted in
  $[\psi_1]$ will be : $ 1 \cdot  P$[one deletion$] + 2 \cdot P[$two
  deletions$] = 1 \cdot \frac{1}{4} + 2 \cdot \frac{3}{4} =
  \frac{7}{4}$.  Therefore, the expected number of lines in
  $[\psi_1 \wedge \psi_2]$ will be equal to $7 \; + \; (7 - \frac{7}{4}) = 7
  (1 + \frac{3}{4}) = 7 (\frac{7}{4}) = 12,25.$  And the boundaries for $\# [\psi_1 \wedge
  \psi_2]$ are $[\min_2,\max_2] = [12,13]$.  
\item \noindent Let us consider  $\psi_3$. \\
Let $\psi_3$ introduce a new additional
  variable.  Using the same type of arguments as for $\psi_2$,
  the expected number of deleted lines in $[\psi_1 \wedge \psi_2]$ will
  be equal to : 
\begin{eqnarray*}
E(\# \; \mbox{deletions in } \; \psi_1 \wedge  \psi_2) &=& \\
& & \hspace{-2cm} \sum_{k=12}^{13} \; \left( \sum_{d=1}^{2^2} \; d \cdot P[ d 
 \mbox{ deletions } | \# [\psi_1 \wedge \psi_2] =  k ] \right) \cdot P[\# [\psi_1 \wedge \psi_2] = k ] \\
\end{eqnarray*}
Let us consider here an example where $ \# [\psi_1 \wedge \psi_2] = 13$.
\begin{eqnarray*}
\mbox{For instance, } [\varphi] = [(x_1 \vee x_2 \vee \neg x_3) \wedge (x_2 \vee \neg x_3
\vee \neg x_4)] =
 \left( \begin{array}{cccc}
x_1 & x_2 & x_3 & x_4\\
\hline
0 & 0 & 0 & 0 \\
0 & 1 & 0 & 0 \\
0 & 1 & 1 & 0 \\
1 & 0 & 0 & 0 \\
1 & 0 & 1 & 0 \\
1 & 1 & 0 & 0 \\
1 & 1 & 1 & 0 \\
0 & 0 & 0 & 1 \\
0 & 1 & 0 & 1 \\
0 & 1 & 1 & 1\\
1 & 0 & 0 & 1 \\
1 & 1 & 0 & 1 \\
1 & 1 & 1 & 1 \\
\end{array} \right)  
\end{eqnarray*}
Let us consider the couples $(x_i,x_j)$ and the number of
deleted lines for each case :
\begin{eqnarray*}
\begin{array}{ccccccc}
x_i&x_j&\#\mbox{ del.}&&x_1&x_2&\#\mbox{ del.}\\
0&0&d_1&&0&0&2 \\ 
0&1&d_2&\mbox{\it For the above example : }&0&1&4 \\ 
1&0&d_3&&1&0&3 \\ 
1&1&d_4&&1&1&4 \\
\end{array}
\end{eqnarray*} 
We see that, whatever the value of $ \# [\psi_1 \wedge \psi_2]$,
$\sum_{i=1}^4 d_i = \# [\psi_1 \wedge
\psi_2].$ 
So, the expected number of deleted clauses, independently from the
case $(x_i,x_j)$, will be :
\[ d_1 \cdot \frac{1}{4} + d_2 \cdot \frac{1}{4} + d_3 \cdot \frac{1}{4}
+ d_4 \cdot \frac{1}{4} = \frac{\sum_i d_i}{4} = \frac{\# [\psi_1 \wedge
  \psi_2]}{4} \]
Therefore, the expected number of lines in  $[\bigwedge_{i=1}^3
\psi_i]$ will be :
\begin{eqnarray*}
E[ \# [\bigwedge_{i=1}^3 \psi_i] ] 
&=& \sum_{k} \{ \# [\psi_1 \wedge \psi_2] + (\# [\psi_1 \wedge
\psi_2] - \# \mbox{ deletions } )\} \cdot
P[\# [\psi_1 \wedge \psi_2] = k ] 
\\
&=& \sum_{k=12}^{13} \{ \# [\psi_1 \wedge \psi_2] + (\# [\psi_1 \wedge
\psi_2] - \frac{ \# [\psi_1 \wedge
\psi_2]}{4} )\} \cdot
P[\# [\psi_1 \wedge \psi_2] = k ] 
\\
&=& \left( 12 \{1 + (1 -
    \frac{1}{4})\} \cdot \frac{3}{4} \right)
+ \left( 13 \{1 + (1 -
    \frac{1}{4})\} \cdot \frac{1}{4} \right) \\
 &=& E[ \# [\bigwedge_{i=1}^2 \psi_i] ] \cdot \frac{7}{4} \\
&=& 7 \cdot (\frac{7}{4})^{2}\\
&=& 21,4375
\end{eqnarray*}
And $\# [\bigwedge_{i=1}^3 \psi_i] \in [\min_3,\max_3] = [12+12-4,13+13-1] = [20,25]$
\item \noindent Let us now consider a given clause $\psi_j \;\; (j \leq n-2)$. \\
We know that $\psi_j$ introduces a new additional
  variable.  Then, using the same type of arguments as for $\psi_3$,
  the expected number of lines will be :
 \begin{eqnarray*}
 E[ \# [\bigwedge_{i=1}^j \psi_i] ] &=& E[ \# [\bigwedge_{i=1}^{j-1}
 \psi_i] ] \cdot  (\frac{7}{4}) = 7 \cdot (\frac{7}{4})^{j-1}. \\
\mbox{and } \;\;\; \# [\bigwedge_{i=1}^j \psi_i] &\in& [2 \; \mbox{$\min$}_{j-1}-2^{j-1}, 2
\; \mbox{$\max$}_{j-1}-1] \\
 &\in& [\max \{0,2^{j-1} (8-j)\} \;,\; 6 \cdot 2^{j-1} + 1] 
\end{eqnarray*}
\item So, for $\psi_{n-2}$, we have ($n \geq 10$) :
\begin{eqnarray}
E[ \# [\bigwedge_{i=1}^{n-2} \psi_i] ] &=& 7 \cdot (\frac{7}{4})^{n-3} \\
\mbox{and } \;\;\; \# [\bigwedge_{i=1}^{n-2} \psi_i] &\in& [0 \;,\; 6
\cdot 2^{n-3} + 1] 
\end{eqnarray} 
\item For $\psi_j$ where $j > n-2$, no new variable will be added, and
  the number of solution will only decrease. \\
Using the same previous argument, we can consider the six possible cases  $(x_i,x_j,x_k)$
and the corresponding $d_i$ with $1 \leq i \leq 8$.  Here again, we
get that : 
\[\sum_{i=1}^8 d_i = \#  [\bigwedge_{i=1}^{j-1} \psi_i]\]
Thus, the expected  number of deleted lines in $[\bigwedge_{i=1}^{j-1}
\psi_i]$ will be $\frac{\#  [\bigwedge_{i=1}^{j-1} \psi_i]}{8}.$\\
There is no other operation to do for $\psi_j$.  We only have to delete
some lines in $[\bigwedge_{i=1}^{j-1} \psi_i]$.  So, computing the
remaining lines, we get :
\begin{eqnarray*}
E[ \# [\bigwedge_{i=1}^{j} \psi_i] ] = E[ \# [\bigwedge_{i=1}^{j-1}
\psi_i] ] \cdot \frac{7}{8} 
\end{eqnarray*}  
And the boundaries will be : 
\[ \# [\bigwedge_{i=1}^{j} \psi_i] \in [0, \mbox{$\max$}_{j-1} - 1] \]
\item Finally, for the last clause $\psi_m$, we get :
\begin{eqnarray*}
E[ \# {\cal S}_\varphi ] = E[ \# [\bigwedge_{i=1}^{m} \psi_i] ] &=& 7 \;
(\frac{7}{4})^{n-3} \; (\frac{7}{8})^{(m-n+2)} \\[12pt]
\mbox{and } \;\;\; \# [\bigwedge_{i=1}^{m} \psi_i] &\in& [0 \;,\; 6
\cdot 2^{n-3} - m + n - 1]
\end{eqnarray*}  
\end{enumerate}
\end{proof}
\begin{theorem} {\bf The most difficult {\em exact uniformly distributed $\alpha$-random
  3-CNF-SAT 
problems} are the ones with a
 ratio $\alpha = \frac{m}{n}$ approximately equal to $ 5,19 $.}
\end{theorem}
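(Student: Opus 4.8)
The plan is to read off the critical ratio directly from the expected-solution-count formula of the previous theorem, and then to argue that the functional-descriptor complexity is maximised precisely at the value of $\alpha$ where that count stops being exponential. Writing $m=\alpha n$, the previous theorem gives $E[\#{\cal S}_{\varphi}]=7\,(7/4)^{n-3}\,(7/8)^{\alpha n-n+2}$, so the exponential rate is
\[
\frac{1}{n}\,\ln E[\#{\cal S}_{\varphi}] \;\longrightarrow\; \ln\tfrac{7}{4} + (\alpha-1)\,\ln\tfrac{7}{8} \qquad (n\to\infty).
\]
This affine function of $\alpha$ is positive for small $\alpha$ and negative for large $\alpha$, and it vanishes at
\[
\alpha^{*} \;=\; 1 + \frac{\ln(7/4)}{\ln(8/7)} \;\approx\; 1 + 4.19 \;=\; 5.19 .
\]
So the first, purely computational, step is this root extraction; everything substantive lies in the two complementary hardness arguments on either side of $\alpha^{*}$.

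For $\alpha>\alpha^{*}$ I would show the instance is ``easy'': here $E[\#{\cal S}_{\varphi}]\to 0$ exponentially fast, so by Markov's inequality $\varphi$ is unsatisfiable with probability tending to $1$, and — because the literals are uniform and independent — unsatisfiability already surfaces inside a sub-family of $O(n)$ clauses introducing $O(1)$ fresh variables per clause (the same re-ordering used in the proof of the previous theorem). Running the descriptor algorithm on that sub-family returns $\nexists\,h_t$ after a bounded number of merges, so $\max_{t}\max_{j}\len_j(h_t)$ stays small and, one checks, decreases as $\alpha$ grows. For $\alpha<\alpha^{*}$ I would invoke the earlier observation that an $\alpha$-random instance embeds in a larger $\alpha'$-random one, so the expected complexity is non-decreasing in $\alpha$ up to the regime change; combined with the collapse just above $\alpha^{*}$, this forces the maximum of the expected complexity to be attained near $\alpha^{*}\approx 5.19$.

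The hard part will be making ``most difficult'' precise and tying it quantitatively to $E[\#{\cal S}_{\varphi}]$: one must establish that the intermediate descriptor lengths $\len_j(h_t)$ of the earlier theorems are genuinely governed by the number of surviving solutions — many solutions meaning little constraint and hence short $h_t(\cdot)$, while $E[\#{\cal S}_{\varphi}]=\Theta(1)$ being exactly where the evolving solution sets are largest relative to their functional descriptions. A fully rigorous version would additionally need concentration of $\#{\cal S}_{\varphi}$ about its mean (a second-moment estimate) to pass from the expectation to the typical instance, together with control of the corrections hidden in the ``$\approx$'': the constant $7$ and the $\pm$ shifts in the exponents, the finite-$k$ truncation in the $M_\alpha(t)$ analysis, and the censoring by $\min$ in (\ref{complexity_formula}). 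I expect $5.19$ to survive as the leading-order root computed above, with these refinements only moving the value inside the stated ``approximately''.
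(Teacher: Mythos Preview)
Your core computation is exactly the paper's: you locate the critical ratio by setting the exponential rate of $E[\#{\cal S}_{\varphi}]$ to zero, and your expression $\alpha^{*}=1+\ln(7/4)/\ln(8/7)$ is algebraically identical to the paper's $-\log 2/\log(7/8)\approx 5{.}19$.

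Where you diverge is in the justification. The paper's proof is far more minimal than yours: it simply \emph{asserts} that the most difficult problems are those in which the satisfiability decision is deferred to the very last clause $\psi_m$, declares this equivalent to $E[\#{\cal S}_{\varphi}]\approx 1$, and then solves that equation for $\alpha$. There is no two-sided monotonicity argument, no Markov inequality, no sub-family detection of unsatisfiability, and no concentration estimate --- those are all refinements you have added. Your embedding argument for $\alpha<\alpha^{*}$ does echo a remark the paper makes elsewhere (at the start of Section~7), but the paper does not invoke it in the proof of this theorem. In short, your proposal is correct at the level of rigor the paper operates at, but it over-engineers what is, in the original, a one-line heuristic followed by an elementary calculation; the items you flag as ``the hard part'' are precisely the parts the paper does not attempt.
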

\begin{proof}
Let us consider {\em exact uniformly distributed $\alpha$-random
  3-CNF-SAT problems.}  The most difficult problems are the ones where the decision between
satisfiability and unsatisfiability arises only when
considering the last clause $\psi_m$. 
This is equivalent to have $E[\# {\cal S}_{\varphi}] \approx 1$.  We
get : 
\begin{eqnarray*} 
E[\# {\cal S}_{\varphi}] \approx 1 &\Leftrightarrow& 7 (\frac{7}{4})^{n-3} (\frac{7}{8})^{m-n+2}
\approx 1 \\
&\Leftrightarrow&  (\frac{7}{8})^{m} \; 2^{n} \approx 1 \\
&\Leftrightarrow&  m \; \log(\frac{7}{8}) + n \; log(2) \approx 0 \\
&\Leftrightarrow&  (\alpha \; n) \; \log(\frac{7}{8}) + n \; log(2) \approx 0 \\
&\Leftrightarrow&  \alpha \; \log(\frac{7}{8}) \approx - log(2) \\
&\Leftrightarrow& \alpha \approx  \frac{- \log(2)}{\log(\frac{7}{8})} \\ 
&\Leftrightarrow& \alpha \approx  5,19089307  
\end{eqnarray*}
\end{proof}
\begin{theorem}
{\bf The relation between {\em exact uniformly distributed  $\alpha$-random
    3-CNF-SAT problems} and {\em usual $\alpha$-random
  3-CNF-SAT problems} can be seen as a reduction of the ratio $\alpha$
through the function : $3 \alpha - \sqrt{1,9098 \;
  \alpha}$}.
\end{theorem}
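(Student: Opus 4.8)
\noindent The plan is to put the two models side by side through the single quantity that, by the $M_\alpha$-complexity theorem, controls the descriptor complexity: the expected number $m_\alpha(i)=\frac{(i-1)(i-2)}{(n-1)(n-2)}\cdot(\text{occurrences of }x_i)$ of clauses having $x_i$ as highest index. In the \emph{exact uniformly distributed} model this occurrence count is deterministically $3\alpha$, whereas in the \emph{usual} $\alpha$-random model it is random, and the claim is that the sole effect of that randomness — once the sorting algorithm has been applied — is to replace $3\alpha$ by the strictly smaller number $3\alpha-\sqrt{1.9098\,\alpha}$, which is the announced reduction of the ratio.

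First I would fix the usual model precisely: the $3m=3\alpha n$ literal slots are filled independently, each receiving a uniformly chosen variable together with a sign. Then the occurrence count $N_i$ of $x_i$ is $Bi(3m,1/n)$, so $E[N_i]=3\alpha$ and $\mathrm{Var}(N_i)=3\alpha(1-1/n)\to 3\alpha$; moreover the positive/negative split inside $N_i$ is $Bi(N_i,1/2)$, i.e. $\approx 3\alpha/2$ each, so the two models agree on all averages and differ only through the \emph{fluctuation} of $N_i$ about $3\alpha$. In the Gaussian regime already used to show that $M_\alpha(\cdot)$ is a concave quadratic, the central limit theorem gives $N_i-3\alpha\approx N(0,3\alpha)$, whence its mean absolute deviation is $E\,|N_i-3\alpha|=\sqrt{3\alpha}\,\sqrt{2/\pi}=\sqrt{6\alpha/\pi}\approx\sqrt{1.9098\,\alpha}$ (using $1.9098\approx 6/\pi$), and likewise $E[\,3\alpha-N_i\mid N_i<3\alpha\,]\approx\sqrt{1.9098\,\alpha}$.

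Next I would bring in the sorting algorithm, which relabels the variables so that their occurrence counts are decreasing in the index: the high-index variables are precisely the below-average ones. The $M_\alpha$-bound is driven by the products $m_\alpha(n^{(j)})$ along a short recursive cascade $n=n^{(0)}>n^{(1)}>\cdots$ of high indices, and by the non-aggregating-trajectories theorem these per-variable contributions simply add, with no clustering. Hence the occurrence count effectively entering each such $m_\alpha(n^{(j)})$ is not $3\alpha$ but the typical value of a below-average count, namely $3\alpha-\sqrt{1.9098\,\alpha}$. Substituting, for the complexity-controlling indices one gets $m_\alpha^{\mathrm{usual}}(i)\approx \frac{(i-1)(i-2)}{(n-1)(n-2)}\bigl(3\alpha-\sqrt{1.9098\,\alpha}\bigr)$, which is exactly $m_{\alpha'}(i)$ for an exact uniformly distributed $\alpha'$-random problem with $3\alpha'=3\alpha-\sqrt{1.9098\,\alpha}$. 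Feeding this back into the $M_\alpha$-complexity theorem shows that a usual $\alpha$-random 3-CNF-SAT problem has the same expected worst-case descriptor complexity as an exact uniformly distributed problem of reduced ratio $\alpha'=\alpha-\tfrac13\sqrt{1.9098\,\alpha}<\alpha$, consistently with the theorem that exact uniform problems are the hardest.

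The main obstacle is precisely the step just sketched: rigorously replacing the whole family of correlated order statistics of the occurrence counts by the single deterministic value $3\alpha-\sqrt{1.9098\,\alpha}$, uniformly over the indices $n^{(j)}$ and the recursion depth $k^\ast$ that actually realise $\max_t M_\alpha(t)$. This needs (i) identifying that band of indices — which, because of the $\min(\,\cdot\,,n^{(k)})$ censoring, is governed by the fixed crossing value $\max_t M_\alpha(t)$ rather than by $n$ — (ii) a uniform mean-absolute-deviation estimate for the corresponding order statistics, delicate for small $\alpha$ where the Poisson law is skewed and $\sqrt{1.9098\,\alpha}$ is only the leading term, and (iii) re-using the non-aggregating-trajectories theorem so that the per-variable deficits add rather than average out.
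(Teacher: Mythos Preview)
Your approach is essentially the same as the paper's: model the occurrence counts as binomial with mean and variance $\approx 3\alpha$, pass to the normal approximation, and compute the folded-normal mean absolute deviation $\sigma\sqrt{2/\pi}=\sqrt{6\alpha/\pi}\approx\sqrt{1.9098\,\alpha}$ to conclude that the effective occurrence count for the complexity-critical (below-average, high-index after sorting) variables is $3\alpha-\sqrt{1.9098\,\alpha}$. You are more explicit than the paper in tying this back to $m_\alpha(i)$ and the $M_\alpha$-bound, and more candid about the steps that are heuristic rather than rigorous---the paper simply asserts the folded-normal reduction on ``the second half of the clauses (where $M_\alpha(t)\ge t$)'' without the caveats you list in (i)--(iii).
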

\begin{proof}
\mbox{}\\
\noindent When considering {\em exact uniformly distributed $\alpha$-random
3-CNF-SAT problems}, each
literal occurs with exactly the same frequency in the $m$ clauses, only
the combination of the literals in each clause being random. We have : $\# x_i
= 3 \; \alpha$. \\
\noindent But the {\em usual uniform $\alpha$-random 3-CNF-SAT
  problems} are such that : $E[ \# x_i ] = 3 \; \alpha$,
where the variables are drawn randomly from a {\it multinomial}
population with $P[x_i$ appears in a clause$] = p_i = \frac{3 \;
  \alpha}{m} = \frac{3}{n}$.
 For large $n$, the number of occurrence for each variable will asymptotically follow a
{\it Normal distribution} $N(\mu,\sigma^2)$ with $\mu = m \cdot p_i =
3 \; \alpha$ and $\sigma^2 = m \cdot p_i (1-p_i) \approx 3 \; \alpha$. \\
\noindent If we consider, {\it after
sorting the clauses as explained in our descriptor approach}, the second
half of the clauses (where $M_\alpha (t) \geq t$), we will get a {\it
  folded normal distribution} for \linebreak $D_{i \;\; \mbox{\it [usual
    $\alpha$-random 3-CNF]}} = |\#\{x_i\} - E[\#\{x_i\}]| =
|\#\{x_i\} - 3 \; \alpha|$.  We have :
\begin{eqnarray*}
D_i &\sim& |N(0, 3 \; \alpha)| \\
E[D_i] &\approx& \sigma \; \sqrt{\frac{2}{\pi}} \\
&\approx& \sqrt{\frac{6 \; \alpha}{\pi}} = \sqrt{1,9098 \; \alpha} \\
\Rightarrow \; E[\# \{x_i\}] &\approx& 3 \alpha - \sqrt{1,9098 \;
  \alpha} \; \mbox{ for the clauses where $M_\alpha (t) \geq t$ }
\end{eqnarray*}
\noindent So, if we have $\# \{x_i\} = \alpha$ in the {\em exact
  uniformly distributed $\alpha$-random 3-CNF-SAT problems}, this
corresponds to an {\em ``folded''} expected occurency $E[\# \{x_i\}] \approx 3 \alpha - \sqrt{1,9098 \;
  \alpha} $ for {\em usual uniform $\alpha$-random 3-CNF-SAT
  problems}.  
\end{proof}
\mbox{}\\
\begin{corollary} The threshold $\alpha = 5,19$ found for exact
    uniformly distributed $\alpha$-random 3-CNF-SAT problems
  corresponds approximatively to a reduced threshod 
\[ \alpha =
  \frac{E[\#\{x_i\}]}{3} = 
  5,19 - \frac{\sqrt{1,9 \times 5,19}}{3} = 4,14135 \]
for usual uniform $\alpha$-random 3-CNF-SAT problems.
\end{corollary}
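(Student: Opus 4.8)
The plan is to obtain the corollary by composing the two preceding theorems, so the argument is essentially the bookkeeping that feeds the exact-uniform threshold through the reduction map. First I would recall the reduction formula: by the previous theorem, a usual uniform $\alpha$-random 3-CNF-SAT problem, once its clauses have been sorted as in our descriptor approach and restricted to the ``hard half'' of clauses (those with $M_\alpha(t) \geq t$), behaves like an exact uniformly distributed problem in which each variable occurs $E[\#\{x_i\}] \approx 3\,\alpha - \sqrt{1{,}9098\,\alpha}$ times, the correction $\sqrt{1{,}9098\,\alpha} = \sigma\sqrt{2/\pi}$ arising from the folded normal law of $|\#\{x_i\} - 3\,\alpha|$.

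Next I would invoke the threshold theorem: the hardest \emph{exact} uniformly distributed $\alpha$-random problems are those for which the decision is forced only by the last clause, i.e.\ $E[\#\,{\cal S}_{\varphi}] \approx 1$, which was shown to occur at $\alpha \approx 5{,}19$; equivalently the per-literal occurrence there is $\#\{x_i\} = \alpha = 5{,}19$. The translation step is to note that a \emph{usual} problem is parametrised by its global ratio $m/n$ rather than by a per-variable occurrence, and that since $m = \frac{1}{3}\sum_i \#\{x_i\}$ one has $\alpha_{\mathrm{usual}} = E[\#\{x_i\}]/3$. Substituting the reduction formula at the exact-uniform threshold gives
\[
\alpha_{\mathrm{usual}} = \frac{E[\#\{x_i\}]}{3} = \frac{3\cdot 5{,}19 - \sqrt{1{,}9098 \times 5{,}19}}{3} = 5{,}19 - \frac{\sqrt{1{,}9 \times 5{,}19}}{3} \approx 4{,}14135 ,
\]
which is the announced value and is pleasingly close to the empirical ratio $4{,}258$ reported in \cite{Crawford199631}.

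The step that I expect to carry the real weight is not this arithmetic but the justification that the exact-uniform threshold transports to usual problems through this correspondence at all. One must argue that the ``hardest instance'' is governed by the same $E[\#\,{\cal S}_{\varphi}] \approx 1$ condition in both models, that the folding argument applies uniformly to essentially all of the clauses that matter for the complexity bound, and that the lower-order fluctuations suppressed by the central-limit / folded-normal approximation do not move the threshold at first order. Making that precise would require controlling the variance of $\#\,{\cal S}_{\varphi}$ along the sorted clause sequence; this is exactly why the statement is phrased as an approximation (``corresponds approximatively'') rather than as an exact identity.
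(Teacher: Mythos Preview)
Your proposal is correct and matches the paper's approach: the paper gives no separate proof for this corollary at all, treating it as an immediate numerical substitution of the exact-uniform threshold $\alpha = 5{,}19$ into the reduction formula $E[\#\{x_i\}] \approx 3\alpha - \sqrt{1{,}9098\,\alpha}$ from the preceding theorem, followed by the division by $3$ to pass from per-variable occurrence back to the clause ratio. Your final paragraph, flagging that the real content lies in justifying why the ``hardest-instance'' condition $E[\#\,{\cal S}_\varphi]\approx 1$ transports between the two models and why the folded-normal correction is the right first-order adjustment, is more than the paper itself offers---the paper simply asserts the correspondence and moves on to the note about the residual discrepancy with observed values.
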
%
\begin{figure}[htb]
\centering
\includegraphics[height=6cm]{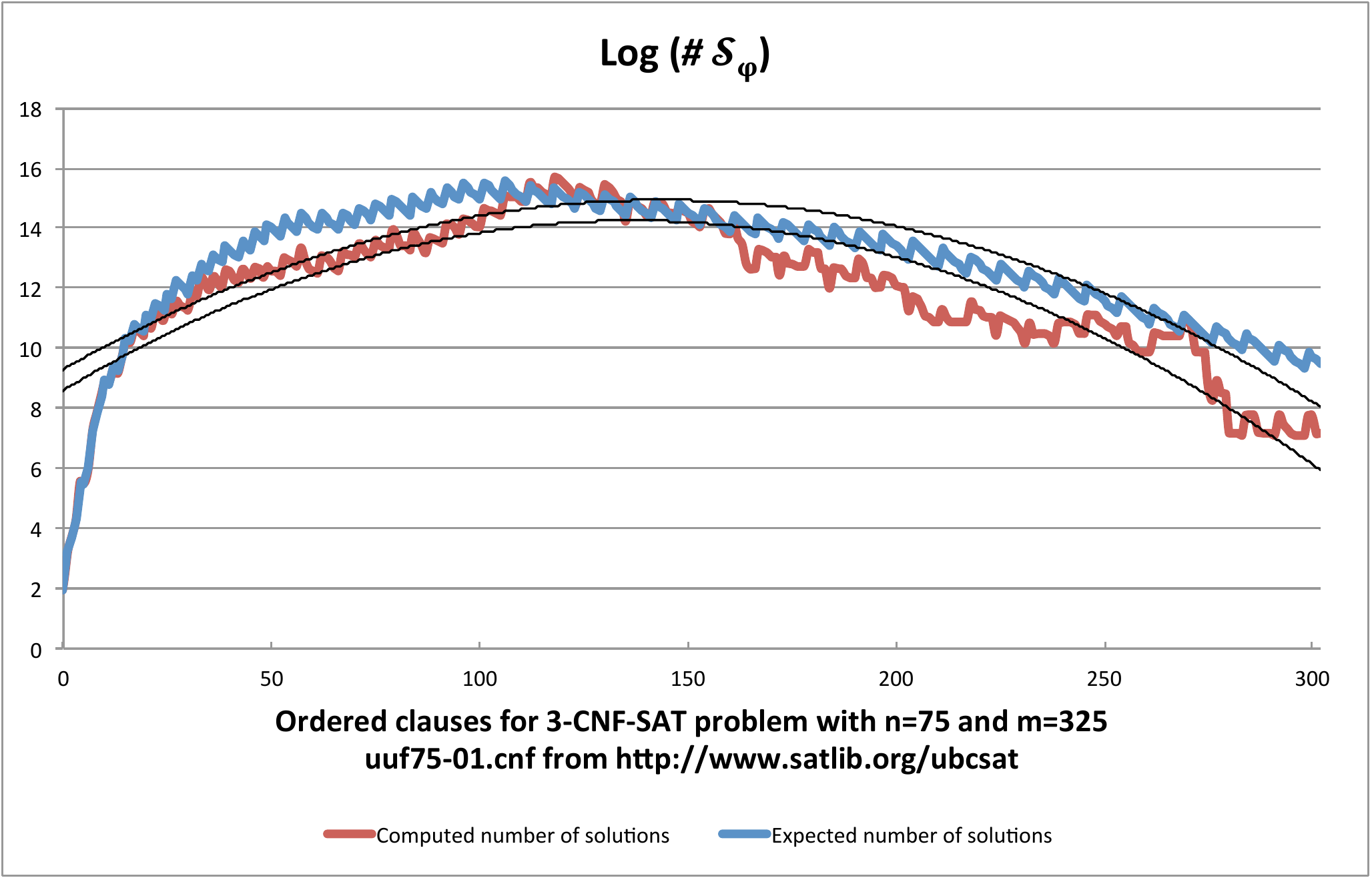}  \\
{\small Fig. 5 : Number of solutions with respect to the analyzed
  clauses\\ for a 3-CNF-SAT problem with $n=75$ and $m=325$}
\end{figure}  
\mbox{}\\
\noindent  {\it Note : } This is still a theoretical value for the threshold.
Indeed, for usual uniform $\alpha$-random generated 3-CNF-SAT problem, we detect a
small difference between the observed and the theoretical expected number of solutions with respect of
the first $j$ analyzed clauses $\bigwedge_{i=1}^{j} \psi_i$.  The theoretical expected number of
  solutions $E[ \# [\bigwedge_{i=1}^j \psi_i] ]$ is defined as $7 \cdot
  (\frac{7}{4})^{s} \cdot (\frac{7}{8})^{t}$, where $s$ is the number
  of clauses in $\{\psi_2,\cdots,\psi_j\}$ introducing new additional variable and $t$ the number
  of remaining clauses. Figure 5
shows the situation for a 3-CNF-SAT
problem with $n=75$ and $m=325$, taken from {\it
  http://www.satlib.org/ubcsat}. \\
This difference shows that theoretical expected
values are over-estimating the observed values. Let us note that 
we {\it re-ordered} the $m$ clauses $\psi_j$ in such a way that 
new additional variables are appearing as lately as possible in the 3-CNF-SAT
problem (in order not to reach too large numbers for $\# {\cal S}_\varphi$). \\
\section{Conclusions and future researches}
\noindent Our researches were built on {\it exact uniformly
  distributed $\alpha$-random 3-CNF-SAT problems}.  The complexity
analysis was mainly done in terms of {\it expected value} for some
characteristics.  We see that these expected values are
over-estimating the real values.  This means that our conclusions
about the most difficult value for $\alpha \; [= 5,19]$, and therefore about the
maximum theoretical value for the complexity $2^{M_\alpha(t)} \; [= 2^{490}]$ are
over fitted.  Future researches will try to
suppress this bias to be more accurate in our estimation of the
complexity for the \classNP problems. \\[12pt]
We have seen that for $\alpha \approx 5,19$, the maximum complexity for a
  $\alpha$-random 3-CNF-SAT problem will be around $2^{490}$ whatever
  the number of variables.  {\bf The \classNP problems are then not
  exponential but {\em bounded exponential} problems.  This makes them
  belonging to
  \classP.} 
 But even with ``yottaflops'' computers ($10^{24}$
instructions by second), this can take about 
``$10136575708788609985206606935908809922268405942697014391424964252461$
$88692463039064879247034987638184445605903560477$''
centuries to solve such problems. $\ddot \smile$   This is not
exponential, only a huge constant upper bound. \\[12pt]
\noindent Even if this paper is mostly theoretical, each theorem was
validated by extensive numerical tests.  Future researches will be to
improve our different algorithms implementing the descriptor approach
for 3-CNF-SAT problems\footnote{I would like to thank Dr. Johan
  Barth\'elemy
for his help in terms of writing and testing these algorithms, as
well as the SMART department of the University of Wollongong for their
welcome.  Codes will be available on www.github.com}. 
%
%
%
%
\nocite{Sipser92}

\addcontentsline{toc}{part}{\mbox{Bibliography}}  
\bibliographystyle{plain}   
\bibliography{mybib}   
\end{document}